\documentclass[conference]{IEEEtran}
\usepackage{graphicx}
\usepackage{amsthm}
\usepackage{epsfig}
\usepackage{latexsym}
\usepackage{amsfonts}
\usepackage{here}
\usepackage{rawfonts}
\usepackage[latin1]{inputenc}
\usepackage[T1]{fontenc}
\usepackage{calc}
\usepackage{capitalgreekitalic}
\usepackage{url}
\usepackage{enumerate}
\usepackage{color}
\usepackage[tbtags]{amsmath}
\usepackage{amssymb}
\usepackage{upref}
\usepackage{epic,eepic}
\usepackage{times}
\usepackage{dsfont}
\usepackage{comment}
\usepackage{cite}

\renewcommand{\Pr}{\ensuremath{\operatorname{Pr}}}


\newtheorem{theorem}{\bf Theorem}
\newtheorem{proposition}{\bf Proposition}

\usepackage{dsfont}


\newcounter{step}
\newlength{\totlinewidth}
\newenvironment{algorithm}{%
  \rule{\linewidth}{1pt}
  \begin{list}{}%
    {\usecounter{step}%
      \settowidth{\labelwidth}{\textbf{Step 2:}}%
      \setlength{\leftmargin}{\labelwidth}%
      \setlength{\topsep}{-2pt}%
      \addtolength{\leftmargin}{\labelsep}%
      \addtolength{\leftmargin}{2mm}%
      \setlength{\rightmargin}{2mm}%
      \setlength{\totlinewidth}{\linewidth}%
      \addtolength{\totlinewidth}{\leftmargin}%
      \addtolength{\totlinewidth}{\rightmargin}%
      \setlength{\parsep}{0mm}%
      \raggedright}}%
  {\end{list}%
  \rule{\linewidth}{1pt}}
\newcounter{substep}

  {\end{list}}

\newlength{\aligntop}
\setlength{\aligntop}{-0.53em}
\newlength{\alignbot}
\setlength{\alignbot}{-0.85\baselineskip}
\addtolength{\alignbot}{-0.1em} \makeatletter
\renewenvironment{align}{%
  \vspace{\aligntop}
  \start@align\@ne\st@rredfalse\m@ne
}{%
  \math@cr \black@\totwidth@
  \egroup
  \ifingather@
    \restorealignstate@
    \egroup
    \nonumber
    \ifnum0=`{\fi\iffalse}\fi
  \else
    $$%
  \fi
  \ignorespacesafterend%
  \vspace{\alignbot}\par\noindent
} \makeatother

\IEEEoverridecommandlockouts

\usepackage{algorithm}
\usepackage{algorithmic}
\usepackage{subfigure}


\begin{document}
\clearpage
\title{\huge Echo State Networks for Proactive Caching in Cloud-Based Radio Access Networks with Mobile Users}
%
\author{\IEEEauthorblockN{\large Mingzhe Chen$^1$, Walid Saad$^{2}$, Changchuan Yin$^1$, and M\'erouane Debbah$^{3,4}$}\vspace{-0.3cm}\\
\IEEEauthorblockA{
\small $^1$ Beijing Laboratory of Advanced Information Network, Beijing University of Posts and Telecommunications, Beijing, China 100876,\\ Emails: \url{chenmingzhe@bupt.edu.cn}, \url{ccyin@ieee.org.}\\
		$^2$ Wireless@VT, Electrical and Computer Engineering Department, Virginia Tech, VA, USA, Emails:\url{walids@vt.edu}.\\
		$^3$ Large Networks and Systems Group (LANEAS), CentraleSup\'elec, Universit\'e Paris-Saclay, Gif-sur-Yvette, France.\\
		$^4$ Mathematical and Algorithmic Sciences Lab, Huawei France R \& D, Paris, France, Email: \url{merouane.debbah@huawei.com}.\\
}\vspace{-0.6cm}}
%
%
%
%

\maketitle

\vspace{0cm}
\begin{abstract}
In this paper, the problem of proactive caching is studied for cloud radio access networks (CRANs). In the studied model, the baseband units (BBUs) can predict the content request distribution and mobility pattern of each user, determine which content to cache at remote radio heads and BBUs. This problem is formulated as an optimization problem which jointly incorporates backhaul and fronthaul loads and content caching.
To solve this problem, an algorithm that combines the machine learning framework of \emph{echo state networks} with sublinear algorithms is proposed. Using  echo state networks (ESNs), the BBUs can predict each user's content request distribution and mobility pattern while having only limited information on the network's and user's state. In order to predict each user's periodic mobility pattern with minimal complexity, the memory capacity of the corresponding ESN is derived for a periodic input. This memory capacity is shown to capture the maximum amount of user information needed for the proposed ESN model. Then, a sublinear algorithm is proposed to determine which content to cache while using limited content request distribution samples. Simulation results using real data from \emph{Youku} and the \emph{Beijing University of Posts and Telecommunications} show that the proposed approach yields significant gains, in terms of sum effective capacity, that reach up to $27.8\%$ and $30.7\%$, respectively, compared to random caching with clustering and random caching without clustering algorithm. 
\end{abstract}
\vspace{0cm}
{\small \emph{Index Terms}--- CRAN; mobility; caching; echo state networks.}
\renewcommand{\thefootnote}{\fnsymbol{footnote}}
\footnotetext{A preliminary version of this work \cite{EchoStateNetworks} was submitted to IEEE GLOBECOM Workshops.\\}
\vspace{0cm}
\renewcommand{\thefootnote}{\fnsymbol{footnote}}
\footnotetext[1]{ This work was supported in part by the National Natural Science Foundation of China under Grants 61671086, 61629101, by the ERC Starting Grant 305123 MORE (Advanced Mathematical Tools for Complex Network Engineering) and by the U.S. National Science Foundation under Grants IIS-1633363, CNS-1460316 and CNS-1513697.}

\vspace{-0cm}
\section{Introduction}
Cellular systems based on cloud radio access networks (CRANs) enable communications using a massive number of remote radio heads (RRHs) are controlled by cloud-based baseband units (BBUs) via wired or wireless fronthaul links \cite{2}. These RRHs act as distributed antennas that can service the various wireless users. 
To improve spectral efficiency, cloud-based cooperative signal processing techniques can be executed centrally at the BBUs \cite{MugenRecent}. 
However, despite the ability of CRAN systems to run such complex signal processing functions centrally, their performance remains limited by the capacity of the fronthaul and backhaul (CRAN to core) links \cite{MugenRecent}. Indeed, given the massive nature of a CRAN, relying on fiber fronthaul and backhaul links may be infeasible. Consequently, capacity-limited wireless or third party wired solutions for the backhaul and fronthaul connections are being studied for CRANs such as in \cite{Jointwireless} and \cite{Optimalfronthaul}.
To overcome these limitations, one can make use of content caching techniques\cite{Cluster,Cooperative,MeanField,Content,BackhaulChen} in which users can obtain contents from storage units deployed at cloud or RRH level.
However, deploying caching strategies in a CRAN environment faces many challenges that include optimized cache placement, cache update, and accurate prediction of content popularity.    
 
The existing literature has studied a number of problems related to caching in CRANs, {heterogeneous networks, and content delivery networks (CDNs) \cite{Cluster,Cooperative,MeanField,Content,BackhaulChen,Tran2016Octopus,Cachingimprovement,Jointcaching,ContextAware,Sung2016Efficient,kang2014mobile,De2011Optimum}}. In \cite{Cluster}, the authors study the effective capacity of caching using stochastic geometry and shed light on the main benefits of caching. The work in \cite{Cooperative} proposes a novel cooperative hierarchical caching framework for the CRAN to improve the hit ratio of caching and reduce backhaul traffic load by jointly caching content at both the BBU level and RRH level. In \cite{MeanField}, the authors analyzed the asymptotic limits of caching using mean-field theory. The work in \cite{Content} introduces a novel approach for dynamic content-centric base station clustering and multicast beamforming that accounts for both channel condition and caching status. In \cite{BackhaulChen}, the authors study the joint design of multicast beamforming and dynamic clustering to minimize the power consumed, while quality-of-service (QoS) of each user is guaranteed and the backhaul traffic is balanced. {The authors in \cite{Tran2016Octopus} propose a novel caching framework that seeks to realize the potential of CRANs by using a cooperative hierarchical caching approach that minimizes the content delivery costs and improves the users quality-of-experience.} In \cite{Cachingimprovement}, the authors develop a new user clustering and caching method according to the content popularity. The authors also present a method to estimate the number of clusters within the network based on the Akaike information criterion. In \cite{Jointcaching}, the authors consider joint caching, routing, and channel assignment for video delivery over coordinated small-cell cellular systems of the future internet and utilize the column generation method to maximize the throughput of the system. The authors in \cite{ContextAware} allow jointly exploiting the wireless and social context of wireless users for optimizing the overall resources allocation and improving the traffic offload in small cell networks with device-to-device communication. 
{In \cite{Sung2016Efficient}, the authors propose an efficient cache placement strategy which uses separate channels for content dissemination and content service. The authors in \cite{kang2014mobile} propose a low-complexity search algorithm to minimize the average caching failure rate.}
However, most existing works on caching such as  \cite{Cluster,Cooperative, MeanField,Content,BackhaulChen,Tran2016Octopus,Cachingimprovement,Jointcaching,ContextAware} have focused on the performance analysis and simple caching approaches that may not scale well in a dense, content-centric CRAN. 
{Moreover, the existing cache replacement works \cite{Sung2016Efficient,kang2014mobile,De2011Optimum} which focus on wired CDNs do not consider the cache replacement in a wireless network such as CRANs in which one must investigate new caching challenges that stem from the dynamic and wireless nature of the system and from the backhaul and fronthaul limitations.}
In addition, these works assume a known content distribution that is then used to design an effective caching algorithm and, as such, they do not consider a proactive caching algorithm that can predict the content request distribution of each user. Finally, most of these existing works neglect the effect of the users' mobility. For updating the cached content, if one can make use of the long-term statistics of user mobility to predict the user association, the efficiency of content caching will be significantly improved \cite{Mobilityaware}. For proactive caching, the users' future position can also enable seamless handover and content download for users.

More recently, there has been significant interest in studying how prediction can be used for proactive caching such as in \cite{Bigdata,SoysaPredicting,NagarajaCaching,TadrousOn,ManytoMany,pompili2016elastic}.
The authors in \cite{Bigdata} develop a data extraction method using the Hadoop platform to predict content popularity. The work in \cite{SoysaPredicting} proposes a fast threshold spread model to predict the future access pattern of multi-media content based on the social information. In \cite{NagarajaCaching}, the authors exploit the instantaneous demands of the users to estimate the content popularity and devise an optimal random caching strategy. In \cite{TadrousOn}, the authors derive bounds on the minimum possible cost achieved by any proactive caching policy and propose specific proactive caching strategies based on the cost function. In \cite{ManytoMany}, the authors formulate a caching problem as a many-to-many matching game to reduce the backhaul load and transmission delay. {The authors in \cite{pompili2016elastic} study the benefits of proactive operation but they develop any analytically rigorous learning technique to predict the users' behavior.} Despite these promising results, existing works such as \cite{Bigdata,SoysaPredicting,NagarajaCaching,TadrousOn,ManytoMany} do not take into account user-centric features, such as the demographics and user mobility. Moreover, such works cannot deal with massive volumes of data that stem from thousands of users connected to the BBUs of a CRAN, since they were developed for small-scale networks in which all processing is done at base station level.
Meanwhile, none of these works in \cite{Bigdata,SoysaPredicting,NagarajaCaching,TadrousOn,ManytoMany} analyzed the potential of using machine learning tools such as neural network for content prediction with mobility in a CRAN.  

 


The main contribution of this paper is a novel proactive caching framework that can accurately predict both the content request distribution and mobility pattern of each user and, subsequently, cache the most suitable contents while minimizing traffic and delay within a CRAN. The proposed approach enables the BBUs to dynamically learn and decide on which content to cache at the BBUs and RRHs, and how to cluster RRHs depending on the prediction of the users' content request distributions and their mobility patterns. 
Unlike previous studies such as \cite{Content}, \cite{Bigdata} and \cite{TadrousOn}, which require full knowledge of the users' content request distributions, we propose a novel approach to perform proactive content caching based on the powerful frameworks of echo state networks (ESNs) and sublinear algorithms \cite{Sublinear}. The use of ESNs enables the BBUs to quickly learn the distributions of users' content requests and locations without requiring the entire knowledge of the users' content requests. The entire knowledge of the user's content request is defined as the user's \emph{context} which includes the information about content request such as age, job, and location. The user's context significantly influence the user's content request distribution. Based on these predictions, the BBUs can determine which contents to cache at cloud cache and RRH cache and then offload the traffic. Moreover, the proposed sublinear approach enables the BBUs to quickly calculate the percentage of each content and determine the contents to cache without the need to scan all users' content request distributions.
To our best knowledge, beyond our work in \cite{Chen2016Echo} that applied ESN for LTE-U resource allocation, no work has studied the use of ESN for proactive caching. In order to evaluate the actual performance of the proposed approach, we use \emph{real data from Youku} for content simulations and use the \emph{realistic measured mobility data from the Beijing University of Posts and Telecommunications} for mobility simulations. Simulation results show that the proposed approach yields significant gains, in terms of the total effective capacity, that reach up to $27.8\%$ and $30.7\%$, respectively, compared to random caching with clustering and random caching without clustering. {Our key contributions are therefore:

\begin{itemize}
\item A novel proactive caching framework that can accurately predict both the content request distribution and mobility pattern of each user and, subsequently, cache the most suitable contents while minimizing traffic and delay within a CRAN.
\item A new ESN-based learning algorithm to predict the users' content request distribution and mobility patterns using users' contexts. 
\item Fundamental analysis on the memory capacity of the ESN with mobility data.     
\item A low-complexity sublinear algorithm that can quickly determine the RRHs clustering and which contents to store at RRH cache and cloud cache.   
\end{itemize} }   
 
The rest of this paper is organized as follows. The system model is described in Section \uppercase\expandafter{\romannumeral2}. The ESN-based content prediction approach is proposed in Section \uppercase\expandafter{\romannumeral3}. The proposed sublinear approach for content caching and RRH clustering is presented in Section \uppercase\expandafter{\romannumeral4}. In Section \uppercase\expandafter{\romannumeral5}, simulation results are  analyzed. Finally, conclusions are drawn in Section \uppercase\expandafter{\romannumeral6}. 

\vspace{-0cm}
\section{System Model and Problem Formulation}
\label{sec:SM}
\begin{figure}[!t]
  \begin{center}
   \vspace{0cm}
    \includegraphics[width=7cm]{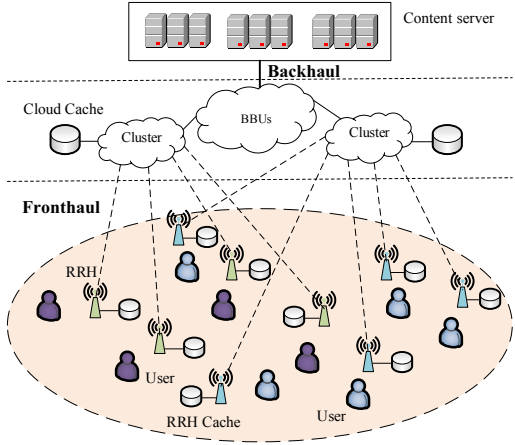}
    \vspace{-0.3cm}
    \caption{\label{CRAN} A CRAN using clustering and caching.}
  \end{center}\vspace{-0.8cm}
\end{figure}
Consider the downlink transmission of a cache-enabled CRAN in which a set $\mathcal{U} = \left\{ {1,2, \cdots ,U} \right\}$ of $U$ users are served by a set $\mathcal{R} = \{1,2,\ldots,{R}\}$ of $R$ RRHs. The RRHs are connected to the cloud pool of the BBUs via capacity-constrained, digital subscriber line (DSL) fronthaul links. The capacity of the fronthaul link is limited and $v_F$ represents the maximum fronthaul transmission rate for all users. As shown in Fig. 1, RRHs which have the same content request distributions are grouped into a virtual cluster which belongs to a set $\mathcal{M} = \mathcal{M}_1 \cup \ldots \cup \mathcal{M}_{M}$ of $M$ virtual clusters. {We assume that each user will always connect to its nearest RRHs cluster and can request at most one content at each time slot $\tau$.} The virtual clusters with their associated users allow the CRAN to use zero-forcing dirty paper coding (ZF-DPC) of multiple-input multiple-output (MIMO) systems to eliminate cluster interference. The proposed approach for forming virtual clusters is detailed in Section \ref{sectionS}. Virtual clusters are connected to the content servers via capacity-constrained wired backhaul links such as DSL. The capacity of the backhaul link is limited with $v_B$ being the maximum backhaul transmission rate for all users \cite{Sparsebeamforming}. Since each RRH may associate with more than one user, the RRH may have more than one type of content request distribution and belong to more than one cluster. Here, we note that the proposed approach can be deployed in any CRAN, irrespective of the way in which the functions are split between RRHs and BBUs. 

\subsection{Mobility Model}
In our model, the users can be mobile and have periodic mobility patterns. In particular, we consider a system in which each user will regularly visit a certain location. For example, certain users will often go to the same office for work at the same time during weekdays. We consider daily periodic mobility of users, which is collected once every $H$ time slots. The proposed approach for predicting the users' periodic mobility patterns is detailed in Section \ref{se:Mobility}. In our model, each user is assumed to be moving from the current location to a target location at a constant speed and this user will seamlessly switch to the nearest RRH as it moves. We ignore the RRH handover time duration that a user needs to transfer from one RRH to another.

Given each user's periodic mobility, we consider the caching of content, separately, at the RRHs and cloud. Caching at the cloud allows to offload the backhaul traffic and overcome the backhaul capacity limitations. In particular, the cloud cache can store the popular contents that all users request from the content servers thus alleviating the backhaul traffic and improve the transmission QoS. Caching at the RRH, referred to as RRH cache hereinafter, will only store the popular content that the associated users request. The RRH cache can significantly offload the traffic and reduce the transmission delay of both the fronthaul and backhaul. We assume that each content can be transmitted to a given user during time slot $\tau$. {In our model, a time slot represents the time duration during which each user has an invariant content request distribution. During each time slot, each user can receive several contents.} The RRH cache is updated each time slot $\tau$ and the cloud cache is updated during $T_\tau$ time slots. {We assume that the cached content update of each RRH depends only on the users located nearest to this RRH.} We also assume that the content server stores a set ${\mathcal{N}}=\{1, 2,\ldots, {N}\}$ of all contents required by all users. All contents are of equal size $L$. The set of $C_c$ cloud cache storage units is given by $\mathcal{C}_c=\{1,2,\cdots,{C_c}\}$, where $C_c \le N$. The set of $C_r$ RRH cache storage units is given by $\mathcal{C}_{r}=\{1,2,\cdots,{C_r}\}$, where $C_{r} \le N$, $r \in \mathcal{R}$. 

\subsection{Transmission Model}
\begin{figure}[!t]
  \begin{center}
   \vspace{0cm}
    \includegraphics[width=9.5cm]{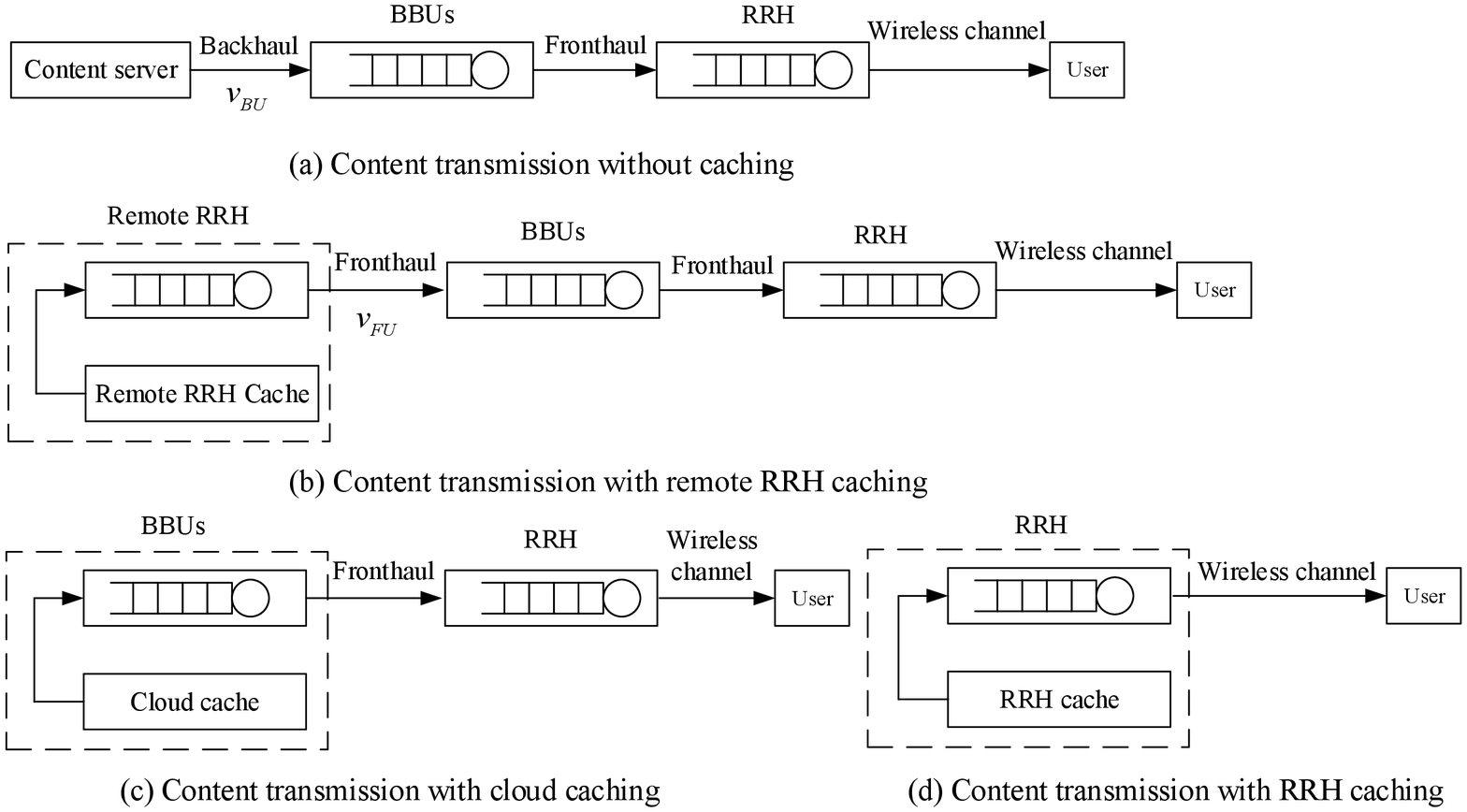}
    \vspace{-0.3cm}
    \caption{\label{CRAN} Content transmission in CRANs.}
  \end{center}\vspace{-0.8cm}
\end{figure}
As shown in Fig. 2, contents can be sent from: a) a content server, b) a remote RRH cache storage unit, c) a cloud cache storage unit, or d) an RRH cache storage unit to the user. An \emph{RRH} refers to an RRH that the user is already associated with, while a \emph{remote RRH} refers to other RRHs that store the user's required content but are not associated to this user. We assume that each content can be transmitted independently, and different contents are processed at different queues. The transmission rate of each content, $v_{BU}$, from the content server to the BBUs is:
\begin{equation}\label{eq:vBU}
\setlength{\abovedisplayskip}{0 pt}
\setlength{\belowdisplayskip}{0 pt}
{v_{BU}} =\frac{{{v_B}}}{{N_B}},
\end{equation}
where $N_B$ is the number of the users that request the contents that must be transmitted from the backhaul to the BBUs. Since the content transmission rates, from the cloud cache to the BBUs and from the RRH cache to the local RRH, can occur at a rate that is higher than that of the backhaul and fronthaul links such as in \cite{Cluster} and \cite{Cooperative}, we ignore the delay and QoS loss of these links. After transmitting the content to the BBUs, the content is delivered to the RRHs over fronthaul links. We also assume that the transmission rate from the RRH to the BBUs is the same as the rate from the BBUs to the RRH. Subsequently, the transmission rate, $v_{FU}$, of each content from the BBUs to the RRHs is
${v_{FU}} =\frac{{{v_F}}}{{ {{N_{F}}}}}$,
where $N_{F}$ is the number of the users that request contents that must be transmitted from the fronthaul to the RRHs. After transmitting the content to the RRHs, the content is transmitted to the users over the radio access channels. Therefore, the total transmission link of a specific content consists of one of the following links: a) content server-BBUs-RRH-user, b) cloud cache-BBUs-RRH-user, c) RRH cache-RRH-user, and d) remote RRH cache-remote RRH-BBUs-RRH-user. Note that the wireless link is time-varying due to the channel as opposed to the static, wired, DSL fronthaul and backhaul links.       
To mitigate interference, the RRHs can be clustered based on the content requests to leverage MIMO techniques. This, in turn, can also increase the effective capacity for each user, since the RRHs can cooperate and use ZF-DPC to transmit their data to the users. Therefore, the received signal-to-interference-plus-noise ratio of user $i$ from the nearest RRH $k \in \mathcal{M}_i$ at time $t$ is \cite{Exploring}:  
\begin{equation}
{\gamma _{t,ik}} = \frac{{Pd_{t,ik}^{ - \beta }{{\left\| {{h_{t,ik}}} \right\|}^2}}}{{\sum\limits_{j \in {\mathcal{M} \mathord{\left/
 {\vphantom {\mathcal{M} {{\mathcal{M}_i}}}} \right.
 \kern-\nulldelimiterspace} {{\mathcal{M}_i}}}} {Pd_{t,ij}^{ - \beta }{{\left\| {{h_{t,ij}}} \right\|}^2}+{\sigma ^2}} }},
\end{equation}
where $h_{t,ik}$ is the Rayleigh fading parameter and $d_{t,ik}^{-\beta}$ is the path loss at time $t$, with $d_{t,ik}$ being the distance between RRH $k$ and user $i$ at time $t$, and $\beta$ being the path loss exponent. $\sigma^2$ is the power of the Gaussian noise, and $P$ is the transmit power of each RRH, assumed to be equal for all RRHs. We also assume that the bandwidth of each downlink user is $B$. Since the user is moving and the distance between the RRH and user is varying, the channel capacity between RRH $k$ and user $i$ at time $t$ will be ${C_{t,ik}} =B{\log _2}\left( {1 + \gamma_{t,ik} } \right)$.
Since each user is served by the nearest RRH, we use $d_{t,i}$, $h_{t,i}$, $C_{t,i}$ and $\gamma_{t,i}$ to refer to $d_{t,ik}$, $h_{t,ik}$ $C_{t,ik}$ and $\gamma_{t,ik}$, for simplicity.  {Note that, ZF-DPC is implemented in the cloud and can be used for any transmission link.}     

\subsection{Effective Capacity}
Since the capacity $C_{t,i}$ does not account for delay, it cannot characterize the QoS of requested content. In contrast, the notion of an effective capacity, as defined in \cite{Effective}, represents a useful metric to capture the maximum content transmission rate of a channel with a specific QoS guarantee. First, we introduce the notion of a QoS exponent that allows quantifying the QoS of a requested content and, then, we define the effective capacity. The QoS exponent related to the transmission of a given content $n$ to a user $i$ with a stochastic waiting queue length $Q_{i,n}$ is \cite{Effective}: 
\begin{equation}\label{eq:thetail}
\setlength{\abovedisplayskip}{4 pt}
\theta_{i,n}  = \mathop {\lim }\limits_{q \to \infty } \frac{{\log_2 \Pr \left[ {Q_{i,n} > q} \right]}}{q},
\end{equation}
where $q$ is the system allowable threshold of queue length. For a large threshold value ${{q_{\max }}}$, the buffer violation probability of content $n$ for user $i$ can be approximated by:
\begin{equation}
\setlength{\abovedisplayskip}{3 pt}
\setlength{\belowdisplayskip}{3 pt}
\Pr \left[ {Q_{i,n} > {q_{\max }}} \right] \mathop  \approx {e^{ - \theta_{i,n} {q_{\max }}}}.
\end{equation}
This approximation is obtained from the large deviation theory. Then, the relation between buffer violation probability and delay violation probability for user $i$ with content $n$ is \cite{Effective}:
\begin{equation}
\setlength{\abovedisplayskip}{4 pt}
\setlength{\belowdisplayskip}{4 pt}
 \Pr \left[ {D_{i,n} > {D_{\max }}} \right] \le k\sqrt {\Pr \left[{Q_{i,n} > {q_{\max }}} \right]}, 
 \end{equation}
 where $D_{i,n}$ is the delay of transmitting content $n$ to user $i$ and $D_{\max}$ is the maximum tolerable delay of each content transmission. Here, $k$ is a positive constant and the maximum delay $q_{\max}=cD_{\max}$, with $c$ being the transmission rate over the transmission links. Therefore, $\theta_{i,n}$ can be treated as the QoS exponent of user $i$ transmitting content $n$ which also represents the delay constraint. A smaller $\theta_{i,n}$ reflects a looser QoS requirement, while a larger $\theta_{i,n}$ expresses a more strict QoS requirement. The QoS exponent pertaining to the transmission of a content $n$ to user $i$ with delay $D_{i,n}$ is \cite{Cluster}:
 \begin{equation}\label{eq:thetaD}
 \setlength{\abovedisplayskip}{4 pt}
\setlength{\belowdisplayskip}{4 pt}
 \theta_{i,n}  = \mathop {\lim }\limits_{{D_{\max }} \to \infty } \frac{{-\log \Pr\left( {D_{i,n} > {D_{\max }}} \right)}}{{{D_{\max }} - {{{N_h}L} \mathord{\left/
 {\vphantom {{{N_h}L} v}} \right.
 \kern-\nulldelimiterspace} v}}},
\end{equation}
where $N_h$ indicates the number of hops of each transmission link and $v$ indicates the rate over the wired fronthaul and backhaul links. Based on (\ref{eq:thetail})-(\ref{eq:thetaD}),  the cumulative distribution function of delay of user $i$ transmitting content $n$ with a delay threshold $D_{\max}$ is given by:
 \begin{equation}\label{eq:PrD}
 \setlength{\abovedisplayskip}{4 pt}
\setlength{\belowdisplayskip}{4 pt}
\Pr \left( {D_{i,n} > {D_{\max }}} \right) \approx {e^{ - \theta_{i,n} \left( {{D_{\max }} - {{{N_h}} \mathord{\left/
 {\vphantom {{{N_h}} v}} \right.
 \kern-\nulldelimiterspace} v}} \right)}}.
\end{equation}
The corresponding \emph{QoS exponents} pertaining to the transmission of a content $n$ to a user $i$ can be given as follows: a) content server-BBUs-RRH-user $\theta_{i,n}^{S}$, b) cloud cache-BBUs-RRH-user $\theta_{i,n}^{A}$, c) local RRH cache-RRH-user $\theta_{i,n}^{O}$, d) remote RRH cache-remote RRH-BBUs-RRH-user $\theta_{i,n}^{G}$. Since the QoS of each link depends on the QoS exponents, we use the relationship between the QoS exponent parameters to represent the transmission quality of each link. In order to quantify the relationship of the QoS exponents among these links, we state the following result:

\begin{proposition}\label{pro1}
\emph{To achieve the same QoS and delay of transmitting content $n$ over the wired fronthaul and backhaul links, the QoS exponents of the four transmission links of content $n$ with $v_{BU}$ and $v_{FU}$ must satisfy the following conditions:}
\begin{equation*}
\setlength{\abovedisplayskip}{3 pt}
\setlength{\belowdisplayskip}{3 pt}
\begin{split}
&\text{a)}\;\;\theta _{i,n}^S = \frac{{\theta _{i,n}^O}}{{1 - {{2L} \mathord{\left/
 {\vphantom {{2L} {{v_{BU}}{D_{\max }}}}} \right.
 \kern-\nulldelimiterspace} {{v_{BU}}{D_{\max }}}}}}, \;\;
\text{b)}\;\;\theta _{i,n}^A = \frac{{\theta _{i,n}^O}}{{1 - {{L} \mathord{\left/
 {\vphantom {{2L} {{v_{FU}}{D_{\max }}}}} \right.
 \kern-\nulldelimiterspace} {{v_{FU}}{D_{\max }}}}}},\;\;\\
&\text{c)}\;\;\theta _{i,n}^G = \frac{{\theta _{i,n}^O}}{{1 - {{2L} \mathord{\left/
 {\vphantom {{3L} {{v_{FU}}{D_{\max }}}}} \right.
 \kern-\nulldelimiterspace} {{v_{FU}}{D_{\max }}}}}}.
\end{split}
\end{equation*}
\end{proposition}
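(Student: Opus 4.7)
The plan is to apply the delay-violation approximation in (\ref{eq:PrD}) to each of the four candidate transmission links and then enforce a common QoS requirement (equal delay-violation probability at the same $D_{\max}$) to pin down the relation between each link's QoS exponent and the baseline $\theta^{O}_{i,n}$ of the local RRH cache link, which is the only link with no wired hops before the wireless access.

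First I would catalogue, for each of the four links, the number of wired hops $N_h$ the content traverses and the wired rate $v$ used on those hops. For the local RRH-cache link (c), the content is already at the serving RRH, so there are no wired hops and (\ref{eq:PrD}) reduces to $\Pr[D_{i,n}>D_{\max}]\approx e^{-\theta^{O}_{i,n}D_{\max}}$. For link (b) cloud cache$\to$BBU$\to$RRH$\to$user, a single fronthaul hop is used, contributing a wired delay $L/v_{FU}$. For link (a) content server$\to$BBU$\to$RRH$\to$user, the content must traverse backhaul and fronthaul, whose effective delay (treating the slower backhaul as the bottleneck assumption implicit in the statement) is accounted as $2L/v_{BU}$. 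For link (d) remote RRH cache$\to$remote RRH$\to$BBU$\to$RRH$\to$user, the content traverses two fronthaul segments, contributing $2L/v_{FU}$.

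Next I would substitute these $N_h L/v$ terms into (\ref{eq:PrD}) for each link to obtain
\begin{equation*}
\Pr[D_{i,n}^{S}>D_{\max}]\approx e^{-\theta^{S}_{i,n}(D_{\max}-2L/v_{BU})},\quad \Pr[D_{i,n}^{A}>D_{\max}]\approx e^{-\theta^{A}_{i,n}(D_{\max}-L/v_{FU})},
\end{equation*}
\begin{equation*}
\Pr[D_{i,n}^{G}>D_{\max}]\approx e^{-\theta^{G}_{i,n}(D_{\max}-2L/v_{FU})}.
\end{equation*}
The requirement of equal QoS (identical delay-violation probability at the chosen $D_{\max}$) means each of these exponents must equal $\theta^{O}_{i,n}D_{\max}$. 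Setting $\theta^{S}_{i,n}(D_{\max}-2L/v_{BU})=\theta^{O}_{i,n}D_{\max}$ and solving yields the claimed formula for $\theta^{S}_{i,n}$; the analogous equalities for links (b) and (d) yield the formulas for $\theta^{A}_{i,n}$ and $\theta^{G}_{i,n}$, after dividing numerator and denominator by $D_{\max}$ to expose the form $1-\alpha L/(vD_{\max})$.

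The only real subtlety, and therefore the main obstacle, is bookkeeping the wired-hop structure of each link and picking the correct rate on each hop. In particular, link (a) traverses both the backhaul and the fronthaul, yet the stated relation uses $v_{BU}$ for both segments; this is consistent with a bottleneck-rate interpretation in which the slower backhaul governs the delay seen at both wired hops, and once this modeling choice is acknowledged the remaining steps are purely algebraic inversion of (\ref{eq:PrD}).
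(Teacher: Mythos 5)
Your proposal is correct and follows essentially the same route as the paper: both arguments equate the delay-violation probability of each link with that of the local RRH-cache link via (\ref{eq:PrD}), with the wired-hop term $N_h L/v$ set to $2L/v_{BU}$, $L/v_{FU}$, and $2L/v_{FU}$ for links a), b), and d) respectively, and then solve algebraically for each exponent. The paper merely phrases the same computation through the reciprocal relation $1/\theta^{S}_{i,n}=1/\theta^{O}_{i,n}-\left(N_hL/v\right)/\left(-\log\Pr(D>D_{\max})\right)$ derived from (\ref{eq:thetaD}), and justifies using $v_{BU}$ on both hops of link a) by citing a bottleneck-rate result, matching the modeling choice you identified.
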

\begin{proof} See Appendix A.
\end{proof}\vspace*{-0cm}

Proposition \ref{pro1} captures the relationship between the QoS exponents of different links. This relationship indicates the transmission QoS for each link. From Proposition \ref{pro1}, we can see that, given the QoS requirement $\theta_{i,n}^{O}$ for transmitting content $n$, the only way to satisfy the QoS requirement $\theta_{i,n}^{O}$ over a link b) is to take the limits of the transmission rate $v_{FU}$ to infinity. Based on Proposition \ref{pro1} and $\theta_{i,n}^O$, we can compute the QoS exponents achieved by the transmission of a content $n$ from different links. The BBUs can select an appropriate link for each content transmission with a QoS guarantee according to the QoS exponent of each link.

Given these basic definitions, the effective capacity of each user is given next. Since the speed of each moving user is constant, the cumulative channel capacity during the time slot $\tau$ is given as ${C_{\tau ,i}} = \sum\nolimits_{t = 1,2, \ldots ,\tau } {{C_{t,i}}}  = {\mathbb{E}_{d_i,h_i}}[{C_{t,i}}]$. Therefore, the effective capacity of user $i$ {receiving} content $n$ during time $\tau$ is given by \cite{Effective}: 
\begin{equation}\label{eq:E}
\setlength{\abovedisplayskip}{4 pt}
\setlength{\belowdisplayskip}{4 pt}
E_{\tau,i}\!\left(  \theta _{i,n_{i\tau},\tau}^j\right)\!=\!  - \mathop \frac{1}{{ \theta _{i,n_{i\tau},\tau}^j \tau}}\log_2 \mathbb{E}_{d_i,h_i}\!\!\!\left[ {{e^{ -  \theta _{i,n_{i\tau},\tau}^j C_{\tau,i} }}} \right],
\end{equation} 
where $n_{i\tau}$ represents the content that user $i$ requests at time slot $\tau$, $j \in \left\{ {O,A,S,G} \right\}$ indicates the link that transmits the content $n$ to user $i$ 
and $\mathbb{E}_{d_i,h_i}\left[x\right]$ is the expectation of $x$ with respect to distribution of $d_i$ and $h_i$.
Based on (\ref{eq:E}), the sum effective capacity of all moving users during time slot $k$ is:
\begin{equation}\label{eq:Se}
\setlength{\abovedisplayskip}{4 pt}
\setlength{\belowdisplayskip}{4 pt}
{E_k} = \sum\limits_{i \in \mathcal{U}} {{E_{k,i}}\left( {\theta _{i,n_{ik},k}^j} \right)}.
\end{equation}
The sum effective capacity $\bar E$ is analyzed during $T$ time slots. Therefore, the long term effective capacity $\bar E$ is given by $\bar E = \frac{1}{T}\sum\nolimits_{k = 1}^T {{E_k}} $. $\bar E$ actually captures the delay and QoS of contents that are transmitted from the content server, remote RRHs, and caches to the network users during a period $T$. {Note that the use of the effective capacity is known to be valid, as long as the following two conditions hold \cite{Effective}: a) Each user's content transmission has its own, individual queue. b) The buffer of each queue is of infinite (large) size.  
Since the BBUs will allocate separate spectrum resource for each user's requested content transmission, we can consider that each users' content transmission is independent and hence, condition a) is satisfied. For condition 2), since we deal with the queue of each user at the level of a cloud-based system, such an assumption will be reasonable, given the high capabilities of a cloud server. Therefore, the conditions are applicable to the content transmission scenario in the proposed framework.}    
\subsection{Problem Formulation}
Given this system model, our goal is to develop an effective caching scheme and content RRH clustering approach to reduce the interference and offload the traffic of the backhaul and fronthaul based on the predictions of the users' content request distributions and periodic mobility patterns. To achieve this goal, we formulate a QoS and delay optimization problem whose objective is to maximize the long-term sum effective capacity. This optimization problem of caching involves predicting the content request distribution and periodic location for each user, and finding optimal contents to cache at the BBUs and RRHs. This problem can be formulated as follows:
\vspace{-0.05cm}
\addtocounter{equation}{0}
\setlength{\abovedisplayskip}{5 pt}
\setlength{\belowdisplayskip}{0 pt}
\begin{equation}\label{eq:sum}
\addtolength\abovedisplayshortskip{-8pt}
 \addtolength\belowdisplayshortskip{-7pt}
\!\!\!\!\!\!\!\!\!\!\!\!\!\!\!\!\!\!\!\!\!\!\mathop {\max }\limits_{{\mathcal{C}_c, \mathcal{C}_{r}}} {\bar E} =\mathop {\max }\limits_{{\mathcal{C}_c, \mathcal{C}_{r}}} \frac{1}{{{T}}}\sum\limits_{k = 1}^{{T}} {\sum\limits_{i \in \mathcal{U}} {E_{k,i}\left( {\theta _{i,n_{ik},k}^j} \right)} } ,\\
\end{equation}
\vspace{-0.1cm}
\begin{align}\label{c1}
&\;\;\;\;\;\;\;\;\;\;\;\;\;\;\;\;\;\;\;\;\;\;\;\;\;\;\;\!\!\!\!\!\!\!\!\!\!\!\!\!\!\!\!\!\!\!\!\!\!\!\!\!\!\!\!\!\!\text{s. t.}\scalebox{1}{$\;\;\;\;{m}\bigcap {{f}}  = \emptyset, m \ne f, m, f \in {{\mathcal{C}_c}}$, or $m, f \in {{\mathcal{C}_r}}$,} \tag{\theequation a}\\
&\;\;\;\;\;\;\;\;\;\;\;\;\;\;\;\;\;\;\;\;\;\;\;\;\!\!\!\!\!\!\!\!\!\!\!\!\!\!\!\!\!\!\!\!\!\!\!\scalebox{1}{$\;\;\;\;\;\;\;\;j \in \left\{ {O,A,S,G} \right\}$,} \tag{\theequation b}\\
&\scalebox{1}{$\;\;\;\;\;\;\;\;\;\;\;\;\;\;\;\;\;\;\;\;\;\;\;\;\;\;\!\!\!\!\!\!\!\!\!\!\!\!\!\mathcal{C}_c, \mathcal{C}_{r}, n_{ik}, \subseteq \mathcal{N}, r \in \mathcal{R},\;\;\;\;$} \tag{\theequation c}
\end{align}
where $\mathcal{C}_c$ and $\mathcal{C}_{r}$ represent, respectively, {the set of contents that stored in the cloud cache and RRH cache}, (\ref{eq:sum}a) captures the fact that each cache storage unit in the RRH and cloud stores a single, unique content, (\ref{eq:sum}b) represents that the links selection of transmitting each content, and (\ref{eq:sum}c) indicates that the contents at the cache will all come from the content server. Here, we note that, storing contents in the cache can increase the rates $v_{BU}$ and $v_{FU}$ of the backhaul and fronthaul which, in turn, results in the increase of the effective capacity. Moreover, storing the most popular contents in the cache can maximize the number of users receiving content from the cache. This, in turn, will lead to maximizing the total effective capacity. Meanwhile, the prediction of each user's mobility pattern can be combined with the prediction of the user's content request distribution to determine which content to store in which RRH cache. Such intelligent caching will, in turn, result in the increase of the effective capacity. Finally, RRHs' clustering with MIMO is used to further improve the effective capacity by mitigating interference within each cluster. Fig. \ref{solution} summarizes the proposed framework that is used to solve the problem in (\ref{eq:sum}). Within this framework, we first use the ESNs predictions of content request distribution and mobility pattern to calculate the average content request percentage for each RRH's associated users. Based on the RRH's average content request percentage, the BBUs determine the content that must be cached at each RRH. Based on the RRH caching and the content request distribution of each user, the BBUs will then decide on which content to cache at cloud.  

\begin{figure}[!t]
  \begin{center}
   \vspace{0cm}
    \includegraphics[width=9cm]{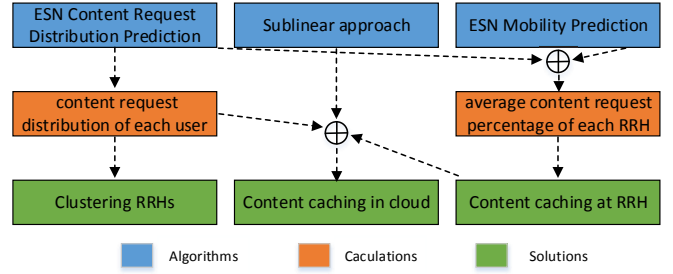}
    \vspace{-0.4cm}
    \caption{\label{solution} Overview of the problem solution.}
  \end{center}\vspace{-0.9cm}
\end{figure}

\section{Echo State Networks for Content Prediction and Mobility}\label{section2} 
The optimization problem in (\ref{eq:sum}) is challenging to solve, because the effective capacity depends on the prediction of the content request distribution which determines the popularity of a given content. The effective capacity also depends on the prediction of the user's mobility pattern that will determine the user association thus affecting the RRH caching. In fact, since the RRH caching and cloud caching need to be aware of the content request distribution of each user in advance, the optimization problem is difficult to solve using conventional optimization algorithms since such conventional approaches are not able to predict the user's content request distribution for the BBUs. Moreover, in a dense CRAN, the BBUs may not have the entire knowledge of the users' contexts that are needed to  
improve the accuracy of the content and mobility predictions thus affecting the cache placement strategy. These reasons render the optimization problem in (\ref{eq:sum}) challenging to solve in the presence of limited information. To address these challenges, we propose a novel approach to predict the content request distribution and mobility pattern for each user based on the powerful framework of \emph{echo state networks} \cite{Harnessing}. ESNs are an emerging type of recurrent neural networks \cite{APractical} that can track the state of a network and predict the future information, such as content request distribution and user mobility pattern, over time.    
\subsection{Content Distribution Prediction}
In this subsection, we formulate the ESN-based content request distribution prediction algorithm. A prediction approach based on ESNs consists of four components: a) agents, b) input, c) output, and d) ESN model. The ESN will allow us to build the content request distribution based on each user's context. The proposed ESN-based prediction approach is thus defined by the following key components:

$\bullet$ \emph{Agents}: The agents in our ESNs are the BBUs. Since each ESN scheme typically performs prediction for just one user, the BBUs must implement $U$ ESN algorithms at each time slot.  

$\bullet$ \emph{Input:} The ESN takes an input vector $\boldsymbol{x}_{t,j}=\left[ {x_{tj1}, \cdots , x_{tjK}} \right]^{\mathrm{T}}$ that represents the context of user $j$ at time $t$ including content request time, week, gender, occupation, age, and device type (e.g., tablet or smartphone). The vector $\boldsymbol{x}_{t,j}$ is then used to determine the content request distribution ${\boldsymbol{y} _{t,j}}$ for user $j$. For example, the types of videos and TV programs that interest young teenage students, will be significantly different from those that interest a much older demographic. Indeed, the various demographics and user information will be critical to determine the content request preferences of various users. Here, $K$ is the number of properties that constitute the context information of user $j$.

$\bullet$ \emph{Output:} The output of the ESN at time $t$ is a vector of probabilities $\boldsymbol{y}_{t,j}= \left[ {{p_{tj1}},{p_{tj2}}, \ldots ,{p_{tjN}}} \right]$ that represents the probability distribution of content request of user $j$, where $p_{tjn}$ is the probability that user $j$ requests content $n$ at time $t$. 

$\bullet$ \emph{ESN Model:} An ESN model can approximate the function between the input $\boldsymbol{x}_{t,j}$ and output $\boldsymbol{y}_{t,j}$, thus building the relationship between each user's context and the content request distribution. For each user $j$, an ESN model is essentially a dynamic neural network, known as the dynamic reservoir, which will be combined with the input $\boldsymbol{x}_{t,j}$ representing the context of user $j$. Mathematically, the dynamic reservoir consists of the input weight matrix $\boldsymbol{W}_j^{\alpha,in} \in {\mathbb{R}^{N_w \times K}}$, and the recurrent matrix $\boldsymbol{W}_j^\alpha \in {\mathbb{R}^{N_w \times N_w}}$, where $N_w$ is the number of the dynamic reservoir units that the BBUs use to store the context of user $j$. The output weight matrix $\boldsymbol{W}_j^{\alpha,out} \in {\mathbb{R}^{N \times \left(N_w+K\right)}}$ is trained to approximate the prediction function. $\boldsymbol{W}_j^{\alpha,out}$ essentially reflects the relationship between context and content request distribution for user $j$. The dynamic reservoir of user $j$ is therefore given by the pair $\left( \boldsymbol{W}_j^{\alpha,in}, \boldsymbol{W}_j^\alpha \right)$ which is initially generated randomly via a uniform distribution and $\boldsymbol{W}_j^\alpha$ is defined as a sparse matrix with a spectral radius less than one \cite{APractical}. $\boldsymbol{W}_j^{\alpha,out}$ is also initialized randomly via a uniform distribution. By training the output matrix $\boldsymbol{W}_j^{\alpha,out}$, the proposed ESN model can predict the content request distribution based on the input $\boldsymbol{x}_{t,j}$, which will then provide the samples for the sublinear algorithm in Section \ref{sectionS} that effectively determines which content to cache. 

Given these basic definitions, we introduce the dynamic reservoir state ${\boldsymbol{v}_{t,j}^\alpha}$ of user $j$ at time $t$ which is used to store the states of user $j$ as follows: 
\begin{equation}\label{eq:state}
\setlength{\abovedisplayskip}{4 pt}
\setlength{\belowdisplayskip}{4 pt}
{\boldsymbol{v}_{t,j}^\alpha} ={\mathop{f}\nolimits}\!\left( {\boldsymbol{W}_j^\alpha{\boldsymbol{v}_{t - 1,j}^\alpha} + \boldsymbol{W}_j^{\alpha,in}{\boldsymbol{x}_{t,j}}} \right),
\end{equation}
where $f\left(  \cdot  \right)$ is the tanh function. Suppose that each user $j$ has a content request at each time slot. Then, the proposed ESN model will output a vector that captures the content request distribution of user $j$ at time $t$. The output yields the actual distribution of content request at time $t$:
\begin{equation}\label{eq:es}
\setlength{\abovedisplayskip}{4 pt}
\setlength{\belowdisplayskip}{4 pt}
\boldsymbol{y}_{t,j} \left(\boldsymbol{x}_{t,j}\right) = {\boldsymbol{W}_{t,j}^{\alpha,out}}\left[ {{\boldsymbol{v}_{t,j}^\alpha};{\boldsymbol{x}_{t,j}}} \right],
\end{equation}
where ${\boldsymbol{W}_{t,j}^{\alpha,out}}$ is output matrix ${\boldsymbol{W}_{j}^{\alpha,out}}$ at time $t$. In other words, (\ref{eq:es}) is used to build the relationship between input ${\boldsymbol{x}_{t,j}}$ and the output $\boldsymbol{y}_{t,j}$. In order to build this relationship, we need to train $\boldsymbol{W}_{t,j}^{\alpha,out}$. A linear gradient descent approach is used to derive the following update rule,
\begin{equation}\label{eq:updatew}
\setlength{\abovedisplayskip}{4 pt}
\setlength{\belowdisplayskip}{4 pt}
{\boldsymbol{W}_{t + 1,j}^{\alpha,out}} = {\boldsymbol{W}_{t,j}^{\alpha,out}} + {\lambda^\alpha }\left( {\boldsymbol{e}_{t,j}-\boldsymbol{y}_{t,j}\left(\boldsymbol{x}_{t,j} \right)} \right)\left[ {{\boldsymbol{v}_{t,j}^\alpha};{\boldsymbol{x}_{t,j}}} \right]^{\mathrm{T}},
\end{equation}
where $\lambda^\alpha$ is the learning rate and $\boldsymbol{e}_{t,j}^\alpha$ is the real content request distribution of user $j$ at time $t$. Indeed, (\ref{eq:updatew}) shows how an ESN can approximate to the function of (\ref{eq:es}).   
\subsection{Mobility Prediction}\label{se:Mobility}
{In this subsection, we study the mobility pattern prediction of each user. First, in mobile networks, the locations of the users can provide key information on the user-RRH association to the content servers which can transmit the most popular contents to the corresponding RRHs. Second, the type of the content request will in fact depend
 on the users' locations.}
Therefore, we introduce a minimum complexity ESN algorithm to predict the user trajectory in this subsection. Unlike the ESN prediction algorithm of the previous subsection, the ESN prediction algorithm of the user mobility proposed here is based on the minimum complexity dynamic reservoir and adopts an offline method to train the output matrix. The main reason behind this is that the prediction of user mobility can be taken as a time series and needs more data to train the output matrix. Therefore, we use a low complexity ESN to train the output matrix and predict the position of each user. The ESN will help us predict the user's position based on the positions that the user had visited over a given past history, such as the past few weeks, for example. Here, the mobility prediction ESN will also include four components, with the BBUs being the agents, and the other components being:

$\bullet$ \emph{Input:}  ${m}_{t,j}$ represents the current location of user $j$. This input ${m}_{t,j}$ combining with the history input data, $\left[{m}_{t-1,j},\dots,{m}_{t-M,j}\right]$, determines the positions ${\boldsymbol{s} _{t,j}}$ that the user is expected to visit. Here, $M$ denotes the number of the history data that an ESN can record. 

$\bullet$ \emph{Output:} $\boldsymbol{s}_{t,j}=\left[ {s_{tj1}, \cdots , s_{tjN_s}} \right]^{\mathrm{T}}$ represents the position that user $j$ is predicted to visit for the next steps, where $N_s$ represents the number of position that user $j$ is expected to visit in the next $N_s$ time duration $H$. 

$\bullet$ \emph{Mobility Prediction ESN Model:} An ESN model builds the relationship between the user's context and positions that the user will visit. For each user $j$, an ESN model will be combined with the input $\boldsymbol{m}_{t,j}$ to record the position that the user has visited over a given past history. The ESN model consists of the input weight matrix $\boldsymbol{W}_j^{in} \in {\mathbb{R}^{W \times 1}}$, the recurrent matrix $\boldsymbol{W}_j \in {\mathbb{R}^{W \times W}}$, where $W$ is the number of units of the dynamic reservoir that the BBUs use to store position records of user $j$, and the output weight matrix $\boldsymbol{W}_j^{out} \in {\mathbb{R}^{N_s \times W}}$. The generation of $\boldsymbol{W}_j^{in}$ and $\boldsymbol{W}_j^{out}$ are similar to the content distribution prediction approach. $\boldsymbol{W}_j$ is defined as a full rank matrix defined as follows:
\begin{equation}\small
\setlength{\abovedisplayskip}{4 pt}
\setlength{\belowdisplayskip}{4 pt}
\boldsymbol{W}_j=\left[ {\begin{array}{*{20}{c}}
{{0}}&{{0}}& \cdots &{{w}}\\
{{w}}&0&0&0\\
0& \ddots &0&0\\
0&0&{{w}}&0
\end{array}} \right],
\end{equation}
where $w$ can be set as a constant or follows a distribution, such as uniform distribution. The value of $w$ will be detailed in Theorem \ref{theorem1}.
Given these basic definitions, we use a linear update method to update the dynamic reservoir state ${\boldsymbol{v}_{t,j}}$ of user $j$, which is used to record the positions that user $j$ has visited as follows: 
\begin{equation}\label{eq:reservoirstate}
\setlength{\abovedisplayskip}{4 pt}
\setlength{\belowdisplayskip}{4 pt}
{\boldsymbol{v}_{t,j}} ={\boldsymbol{W}_j{\boldsymbol{v}_{t - 1,j}} + \boldsymbol{W}_j^{in}{{m}_{t,j}}}.
\end{equation}
The position of output $\boldsymbol{s}_{t,j}$ based on ${\boldsymbol{v}_{t,j}}$ is given by:
\begin{equation}\label{eq:y2}
\setlength{\abovedisplayskip}{4 pt}
\setlength{\belowdisplayskip}{4 pt}
{\boldsymbol{s}_{t,j}} = \boldsymbol{W}_j^{out}{\boldsymbol{v}_{t,j}}.
\end{equation}
In contrast to (\ref{eq:updatew}), $\boldsymbol{W}_j^{out}$ of user $j$ is trained in an offline manner using ridge regression \cite{APractical}:
\begin{equation}\label{eq:w2}
\setlength{\abovedisplayskip}{4 pt}
\setlength{\belowdisplayskip}{4 pt}
 \boldsymbol{W}_j^{out}=\boldsymbol{s}_j{\boldsymbol{v}_j^{\rm T}}{\left({\boldsymbol{v}_j^{\rm T}}\boldsymbol{v}_j + {\lambda ^2}\boldsymbol{\rm I}\right)^{ - 1}},
 \end{equation}
where $\boldsymbol{v}_j=\left[\boldsymbol{v}_{1,j},\dots, \boldsymbol{v}_{N_{tr},j}\right] \in \mathbb{R}^{W \times N_{tr}} $ is the reservoir states of user $j$ for a period $N_{tr}$, $\boldsymbol{s}_j$ is the output during a period $N_{tr}$, and $\boldsymbol{\rm I}$ is the identity matrix.     

Given these basic definitions, we derive the memory capacity of the mobility ESN which is related to the number of reservoir units and the value of $w$ in $\boldsymbol{W}_j$. {The ESN memory capacity is used to quantify the number of the history input data that an ESN can record. For the prediction of the mobility pattern,} the memory capacity of the mobility ESN determines the ability of this model to record the locations that each user $j$ has visited. First, we define the following $K \times K$ matrix, given that ${\boldsymbol{W}_j^{in}} = {\left[ {w_1^{in}, \ldots ,w_{W}^{in}} \right]^{\rm T}}$:
\begin{equation}\small
\boldsymbol{\Omega}  = \left[ {\begin{array}{*{20}{c}}
{w_1^{in}}&{w_{W}^{in}}& \cdots &{w_2^{in}}\\
{w_2^{in}}&{w_1^{in}}& \cdots &{w_3^{in}}\\
 \vdots & \vdots & \cdots & \vdots \\
{w_{W}^{in}}&{w_{W-1}^{in}}& \cdots &{w_1^{in}}
\end{array}} \right].
\end{equation}
Then, the memory capacity of the mobility ESN can be given as follows: 
\vspace{-0.12cm}      
\begin{theorem}\label{theorem1}
\emph{In a mobility ESN, we assume that the reservoir $\boldsymbol{W}_j$ is generated randomly via a specified distribution, $\boldsymbol{W}_j^{in}$ guarantees that the matrix $\boldsymbol{\Omega}$ regular, and the input ${m}_{t,j}$ has periodicity. Then, the memory capacity of this mobility ESN will be given by:}
\begin{equation}\label{eq:theorem1}\small
\setlength{\abovedisplayskip}{4 pt}
\setlength{\belowdisplayskip}{4 pt}
M\!\! =\!\!\!\sum\limits_{k = 0}^{W - 1} {{{\!\!\left(\sum\limits_{j = 0}^\infty  {\mathbb{E}\!\left[{w^{2Wj + 2k}}\right]\!} \right)^{\!\! \!- 1}}}\!\!\sum\limits_{j = 0}^\infty  {\mathbb{E}{{\left[{w^{Wj + k}}\right]}^2}} } \!\!-{{\left(\sum\limits_{j = 0}^\infty {\mathbb{E}\left[{w^{2Wj}}\right]} \right)^{\!\!\!-1}}}\!\!.
\end{equation}

\end{theorem}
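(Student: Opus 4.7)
The plan is to exploit the cyclic structure of the reservoir: $\boldsymbol{W}_j = w\,\boldsymbol{P}$, where $\boldsymbol{P}$ is the shift permutation implicit in the definition of $\boldsymbol{W}_j$ and satisfies $\boldsymbol{P}^W = \boldsymbol{I}$, so that $\boldsymbol{W}_j^n = w^n\,\boldsymbol{P}^{n\bmod W}$. Unrolling the linear recurrence \refeq{eq:reservoirstate} and grouping the infinite sum by residue class $n = Ws+k$ with $k\in\{0,\ldots,W-1\}$ gives
\begin{equation*}
\boldsymbol{v}_{t,j} \;=\; \sum_{n=0}^{\infty} w^n\,\boldsymbol{P}^{\,n\bmod W}\,\boldsymbol{W}_j^{in}\,m_{t-n,j} \;=\; \sum_{k=0}^{W-1}\bigl(\boldsymbol{P}^{k}\boldsymbol{W}_j^{in}\bigr)\,\alpha_k(t),
\end{equation*}
where $\alpha_k(t)=\sum_{s\ge 0} w^{Ws+k}\,m_{t-Ws-k,j}$. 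The $W$ vectors $\boldsymbol{P}^{k}\boldsymbol{W}_j^{in}$ are exactly the columns of the circulant matrix $\boldsymbol{\Omega}$, so $\boldsymbol{v}_{t,j} = \boldsymbol{\Omega}\,\boldsymbol{\alpha}(t)$. Since $\boldsymbol{\Omega}$ is regular by hypothesis, the change of variables $\boldsymbol{c}\mapsto \boldsymbol{d}:=\boldsymbol{\Omega}^{T}\boldsymbol{c}$ is a bijection and any linear readout $\boldsymbol{c}^{T}\boldsymbol{v}_{t,j}$ equals some $\boldsymbol{d}^{T}\boldsymbol{\alpha}(t)$, so the memory-capacity optimization can be carried out over $\boldsymbol{d}$ instead of over $\boldsymbol{c}$.

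Next I would compute the memory capacity for each delay $k'=Ws'+k$ as the maximum squared Pearson correlation between $\boldsymbol{d}^{T}\boldsymbol{\alpha}(t)$ and $m_{t-k',j}$, taking expectations jointly over $w$ (treated as a random scalar independent of the input) and the periodic input $\{m_{t,j}\}$. Using the periodicity of $m_{t,j}$ together with the usual ESN stationarity hypothesis so that $\mathbb{E}[m_{a,j}\,m_{b,j}]=\sigma^{2}\mathbb{1}[a=b]$ at the relevant lags, the denominator $\mathbb{E}\bigl[(\boldsymbol{d}^{T}\boldsymbol{\alpha})^{2}\bigr]$ becomes diagonal in $k$ and collapses to $\sigma^{2}\sum_{k}d_{k}^{2}\sum_{s\ge 0}\mathbb{E}[w^{2Ws+2k}]$, while the numerator isolates the single cross-term $d_{k}\,\sigma^{2}\,\mathbb{E}[w^{Ws'+k}]$. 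Maximizing over $\boldsymbol{d}$ (taking $d_{k}$ proportional to $\mathbb{E}[w^{Ws'+k}]$ and all other components zero) gives $\mathrm{MC}_{k'}=\mathbb{E}[w^{Ws'+k}]^{2}\big/\sum_{s\ge 0}\mathbb{E}[w^{2Ws+2k}]$.

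Finally, summing $\mathrm{MC}_{k'}$ over all $s'\ge 0$ within each residue class $k$ and then over $k=0,\ldots,W-1$ reconstructs the double sum that appears in \refeq{eq:theorem1}; the subtracted term $\bigl(\sum_{s\ge 0}\mathbb{E}[w^{2Ws}]\bigr)^{-1}$ removes the $(k,s')=(0,0)$ contribution, corresponding to the trivial readout of the current input $m_{t,j}$, which does not count as memory. The main obstacle will be the denominator computation: rigorously justifying, from the stated periodicity hypothesis alone, the vanishing of the cross-terms $\mathbb{E}[m_{t-Ws-k,j}\,m_{t-Wl-k',j}]$ for $(s,k)\ne(l,k')$. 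This decorrelation is what reduces the double series in $\mathbb{E}[(\boldsymbol{d}^{T}\boldsymbol{\alpha})^{2}]$ to the single series in the denominator, and it is jointly with the invertibility of $\boldsymbol{\Omega}$ what produces the precise ratio in \refeq{eq:theorem1} rather than a trivial $1$ per residue class.
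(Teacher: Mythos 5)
Your proposal is correct and takes essentially the same route as the paper's Appendix B: unroll the linear recurrence, exploit the circulant structure via $\boldsymbol{\Omega}$, take the Wiener-optimal linear readout under a lag-decorrelated input, and sum the per-delay squared correlations grouped by residue class mod $W$, subtracting the zero-delay term. Your change of variables $\boldsymbol{d}=\boldsymbol{\Omega}^{\rm T}\boldsymbol{c}$ merely pre-diagonalizes the covariance that the paper writes explicitly as $\boldsymbol{\Omega}\boldsymbol{\Gamma}\boldsymbol{\Omega}^{\rm T}$ and then inverts, and the cross-term cancellation you flag as the main obstacle is likewise asserted rather than derived in the paper's proof.
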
 
\begin{proof} See Appendix B.
\end{proof}
The memory capacity of the mobility ESN indicates the ability of the mobility ESN model to record the locations that each user has visited. From Theorem \ref{theorem1}, we can see that the ESN memory capacity depends on the distribution of reservoir unit $w$ and the number of the reservoir units $W$. A larger memory capacity implies that the ESN can store more locations that the user has visited. The visited locations can improve the prediction of the user mobility. Since the size of the reservoir $\boldsymbol{W}_j$ and the value of $w$ will have an effect on the mobility prediction, we need to set the size of $\boldsymbol{W}_j$ appropriately to satisfy the memory capacity requirement of the user mobility prediction based on Theorem \ref{theorem1}.  
Different from the existing works in \cite{Short} and \cite{Minimum} that use an independent and identically distributed input stream to derive the ESN memory capacity, we formulate the ESN memory capacity with a periodic input stream. Next, we formulate the upper and lower bounds on the ESN memory capacity with different distributions of the reservoir $\boldsymbol{W}_j$. The upper bound of the ESN memory capacity can give a guidance for the design of $\boldsymbol{W}_j$. 

\begin{proposition}\label{pro3}
\emph{Given the distribution of the reservoir $\boldsymbol{W}_j$ $\left(\left| w \right| < 1\right)$, the upper and lower bounds of the memory capacity of the mobility ESNs are given by:}\\
\romannumeral1) \emph{ If $w \in \boldsymbol{W}_j$ follows a zero-mean distribution $\left(\text{i.e. } {w \in \left[ { - 1,1} \right]} \right)$, then $0 \le M < {\left\lfloor {\frac{W}{2}} \right\rfloor }+1$, where $\left\lfloor {x} \right\rfloor $ is the floor function of $x$.}\\
\romannumeral2) \emph{ If $w \in \boldsymbol{W}_j$ follows a distribution that makes $w > 0$, then $0 < M < W$.}
\end{proposition}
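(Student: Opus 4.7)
The starting point is the closed-form expression from Theorem \ref{theorem1}. Let $a_n := \mathbb{E}[w^n]$ and define
\begin{equation*}
T_k := \frac{\sum_{j=0}^{\infty} a_{Wj+k}^2}{\sum_{j=0}^{\infty} a_{2Wj+2k}},
\end{equation*}
so that $M = \sum_{k=0}^{W-1} T_k - (\sum_{j} a_{2Wj})^{-1}$. Since $|w|<1$, each geometric series converges by monotone convergence, and every denominator is strictly positive (the $j=0$ summand already contributes $1$). The engine of the proof is Jensen's inequality $\mathbb{E}[w^m]^2 \le \mathbb{E}[w^{2m}]=a_{2m}$, applied termwise, which yields $T_k \le 1$ for every $k$. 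Retaining only the $j=0$ contribution in the numerator of $T_0$ gives the matching lower estimate $T_0 \ge (\sum_{j} a_{2Wj})^{-1}$, and combined with $T_k \ge 0$ this immediately produces $M \ge 0$ in both regimes.

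For part (ii), positivity of $w$ forces $a_n > 0$ for every $n$, hence $T_k > 0$ for $k \ge 1$, so $M > 0$ strictly. For the upper bound, $\sum_{k=0}^{W-1} T_k \le W$ while $(\sum_{j} a_{2Wj})^{-1}=1/\mathbb{E}[1/(1-w^{2W})] > 0$ strictly, so $M < W$. No symmetry or parity argument is needed here.

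For part (i), the hypothesis that $w$ has a symmetric zero-mean distribution on $[-1,1]$ forces $a_n = 0$ for every odd $n$, and the parity of $Wj+k$ then governs which summands survive. When $W$ is even, $Wj+k$ has the parity of $k$, so $T_k = 0$ for every odd $k$; the remaining $W/2$ terms are each bounded by $1$, and $M \le W/2 - (\sum_j a_{2Wj})^{-1} < \lfloor W/2\rfloor + 1$. When $W$ is odd, $Wj+k$ is even iff $j\equiv k \pmod 2$, so within each $T_k$ only half the numerator survives. Writing $S_k:=\sum_j a_{2Wj+2k}$ and splitting $S_k = S_k^{\mathrm{even}} + S_k^{\mathrm{odd}}$ by parity of $j$, Jensen gives $T_k \le S_k^{\mathrm{even}}/S_k$ for $k$ even and $T_k \le S_k^{\mathrm{odd}}/S_k$ for $k$ odd. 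A short elementary calculation (carried out most transparently after factoring out an expectation over $v:=w^2$ and using $(1-v^W)(1+v^W)=1-v^{2W}$) shows that the $(W+1)/2$ even-$k$ contributions and the $(W-1)/2$ odd-$k$ contributions combine to give $\sum_k T_k < (W+1)/2 = \lfloor W/2\rfloor + 1$, with strict inequality forced by $|w|<1$.

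The main obstacle is the odd-$W$ subcase of (i): the asymmetric counts $(W+1)/2$ vs.\ $(W-1)/2$ must be balanced against the restricted-parity Jensen bounds, which requires the identity $S_k^{\mathrm{even}}+S_k^{\mathrm{odd}} = S_k$ rather than a one-line termwise estimate. Everything else reduces to Jensen's inequality, strict positivity of the tail $\sum_j a_{2Wj}$, and an elementary parity count.
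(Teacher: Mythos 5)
Your route is genuinely different from the paper's. The paper does not attempt a general bound at all: for part (i) it evaluates the memory capacity of Theorem~1 in closed form for the specific two-point law $P(w=a)=P(w=-a)=1/2$, obtaining $M=\lfloor W/2\rfloor+a^{2W}$, and for part (ii) it uses the point mass $P(w=a)=1$, obtaining $M=W-1+a^{2W}$; the stated bounds are then read off from these exact expressions together with an informal remark that $M$ grows with the moments of $w$. You instead bound $M$ for an \emph{arbitrary} admissible distribution by applying Jensen's inequality $\mathbb{E}[w^m]^2\le\mathbb{E}[w^{2m}]$ termwise, which yields $T_k\le 1$, the lower bound $M\ge 0$, and all of part (ii) cleanly and in full generality --- strictly more than the paper establishes for (ii).

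Two caveats on part (i). First, you quietly upgrade ``zero-mean'' to ``symmetric'': a zero-mean law on $[-1,1]$ need not have vanishing odd moments beyond the first (e.g.\ mass $2/3$ at $-0.4$ and $1/3$ at $0.8$), and your parity bookkeeping requires $\mathbb{E}[w^n]=0$ for \emph{all} odd $n$. The paper's proof has the same restriction baked into its choice of test distribution, so you are no worse off, but this should be flagged as an added hypothesis. Second, the odd-$W$ subcase --- which you correctly identify as the main obstacle --- is asserted rather than proved. After writing $T_k\le S_k^{\mathrm{even}}/S_k$ for even $k$ and $T_k\le S_k^{\mathrm{odd}}/S_k$ for odd $k$, and setting $r_k=S_k^{\mathrm{odd}}/S_k$, the claim $\sum_k T_k\le (W+1)/2$ reduces to $\sum_{k\,\mathrm{odd}}r_k\le\sum_{k\,\mathrm{even}}r_k$, which is not a one-line identity. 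It does follow from the fact that $r_k$ is nondecreasing in $k$ (a correlation/likelihood-ratio argument using that, with $v=w^2$, the ratio $v^{W}(1-v^{W})/(1-v^{2W})=v^{W}/(1+v^{W})$ is increasing in $v$), after which the odd indices $1,3,\dots,W-2$ pair injectively with the even indices $2,4,\dots,W-1$ and the even class retains the extra index $k=0$. With that lemma supplied, and using the strictly positive subtracted term $\bigl(\sum_{j}\mathbb{E}[w^{2Wj}]\bigr)^{-1}$, your strict inequality $M<\lfloor W/2\rfloor+1$ goes through, and your argument then covers every symmetric law rather than the single family treated in the paper.
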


\begin{proof} See Appendix C.   
\end{proof}
From Proposition \ref{pro3}, we can see that, as $P\left(w = a\right) = 1$ and $a \to 1$, the memory capacity of the mobility ESN $M$ will be equal to the number of reservoir units $W$. Since we predict $N_s$ locations for each user at time $t$, we need to set the number of reservoir units above $W=N_s+1$.     
\section{Sublinear Algorithm for Caching}\label{sectionS} 
The predictions of the content request distribution and user mobility pattern in Section \ref{section2} must now be leveraged to determine which content to cache at the RRHs, cluster the RRHs at each time slot, and identify which contents to store in cloud cache during a given period. Clustering the RRHs based on the request content will also enable the CRAN to use ZF-DPC of MIMO to eliminate cluster interference. However, it is challenging for the BBUs to scan each content request distribution prediction among the thousands of users' content request distribution predictions resulting from the ESNs' output within a limited time. In addition, in a dense CRAN, the BBUs may not have the entire knowledge of the users' contexts and distributions of content request in a given period, thus making it challenging to determine which contents to cache as per (\ref{eq:sum}). To address these challenges, we propose a novel \emph{sublinear approach} for caching\cite{Sublinear}. 

A sublinear algorithm is typically developed based on random sampling theory and probability theory \cite{Sublinear} to perform effective big data analytics. In particular, sublinear approaches can obtain the approximation results to the optimal result of an optimization problems by only looking at a subset of the data for the case in which the total amount of data is so massive that even linear processing time is not affordable. For our model, a sublinear approach will enable the BBUs to compute the average of the content request percentage of all users so as to determine content caching at the cloud without scanning through the massive volume of data pertaining to the users' content request distributions. Moreover, using a sublinear algorithm enables the BBUs to determine the similarity of two users' content request distributions by only scanning a portion of each content request distribution. Compared to traditional stochastic techniques, a sublinear algorithm can control the tradeoff between algorithm processing time or space, and algorithm output quality. Such algorithms can use only a few samples to compute the average content request percentage within the entire content request distributions of all users. 

Next, we first begin by describing how to use sublinear algorithm for caching. Then, we introduce the entire process using ESNs and sublinear algorithms used to solve (\ref{eq:sum}).

\subsection{Sublinear Algorithm for Clustering and Caching} \label{al:sub}  
In order to cluster the RRHs based on the users' content request distributions and determine which content to cache at the RRHs and BBUs, we first use the prediction of content request distribution and mobility for each user resulting from the output of the ESN schemes to cluster the RRHs and determine which content to cache at RRHs. {The detailed clustering step is specified as follows:
\begin{itemize}
\item The cloud predicts the users' content request distribution and mobility patterns.  
\item Based on the users' content request distribution and locations, the cloud can estimate the users' RRH association. 
\item Based on the users' RRH association, the cloud can determine each RRH's content request distribution and then cluster the RRHs into several groups. For any two RRHs, when the difference of their content request distributions is below $\chi$, the cloud will cluster these two RRHs into the same group. Here, we use the sublinear Algorithm 8 in \cite{Sublinear} to calculate the difference between two content request distributions.  
\end{itemize}} 

Based on the RRHs' clustering, we compute the average of the content request percentage of all users and we use this percentage to determine which content to cache in the cloud. 
Based on the prediction of content request distribution and mobility for each user resulting from the output of the ESN schemes, each RRH must determine the contents to cache according to the ranking of the average content request percentage of its associated users, as given by the computed percentages. For example, denote $\boldsymbol{p}_{r,1}$ and $\boldsymbol{p}_{r,2}$ as the prediction of content request distribution for two users that are associated with RRH $r$. The average content request percentage is given as $\boldsymbol{p}_r={{({\boldsymbol{p}_{r,1}} + {\boldsymbol{p}_{r,2}})} \mathord{\left/
 {\vphantom {{({p_1} + {p_2})} 2}} \right.
 \kern-\nulldelimiterspace} 2}$. Based on the ranking of the average content request percentage of the associated users, the RRH selects $C_r$ contents to store in the cache as follows:
\begin{equation}\label{eq:Cr}
\setlength{\abovedisplayskip}{2 pt}
\setlength{\belowdisplayskip}{3 pt}
\mathcal{C}_{r}=\mathop {\arg\max }\limits_{\mathcal{C}_{r}} \sum\limits_{n \in \mathcal{C}_{r} } {{p_{rn}}},
\end{equation}
where $p_{rn}={\sum\nolimits_{i \in {\mathcal{U}_r}} {p_{rin}} E_{k,i}(\theta _{i,n,k}^O)  \mathord{\left/
 {\vphantom {\sum\nolimits_{i \in {\mathcal{U}_r}} {{\boldsymbol{p}_i}}  N_r}} \right.
 \kern-\nulldelimiterspace} N_r}$ is the average weighted percentage of the users that are associated with RRH $r$ requesting content $n$, $\mathcal{U}_r$ is the set of users that are associated with RRH $r$, and $N_r$ is the number of users that are associated with RRH $r$.

To determine the contents that must be cached at cloud, the cloud needs to update the content request distribution of each user to compute the distribution of the requested content that must be transmitted via fronthaul links based on the associated RRH cache. We define the distribution of the requested content that must be transmitted via fronthaul links using the updated content request distribution, $\boldsymbol{p}'_{r,1}=\left[p'_{r11},\dots,p'_{r1N}\right]$. The difference between $\boldsymbol{p}_{r,1}$ and $\boldsymbol{p}'_{r,1}$ is that $\boldsymbol{p}_{r,1}$ contains the probability of the requested content that can be transmitted from the RRH cache. For example, we assume that content $n$ is stored at the cache of RRH $r$, which means that content $n \in \mathcal{C}_{r}$, consequently, $p'_{r1n}=0$. Based on the updated content request distribution, the BBUs can compute the average percentage of each content within the entire content request distributions. For example, let $\boldsymbol{p}'= \sum\nolimits_{\tau  = 1}^{{T }}{{\sum\nolimits_{i = 1}^U {{\boldsymbol{p}'_{\tau,i}E_{k,i}(\theta _{i,k}^A)}} } \mathord{\left/
 {\vphantom {{\sum\nolimits_{i = 1}^R {{p_{\tau,i}}} } {TU}}} \right.
 \kern-\nulldelimiterspace} {TU}}$ be the average of the updated content request probability during $T$, where $\boldsymbol{p}'_{\tau,i}$ is the updated content request distribution of user $i$ during time slot $\tau$. Consequently, the BBUs select $C_c$ contents to store at the cloud cache according to the rank of the average updated content request percentage $\boldsymbol{p}'$ which is:
 \begin{equation}\label{eq:Cc}
 \setlength{\abovedisplayskip}{4 pt}
\setlength{\belowdisplayskip}{4 pt}
\mathcal{C}_{c}=\mathop {\arg \max} \limits_{\mathcal{C}_{c}}  \sum\limits_{n \in \mathcal{C}_{c} } {{p'_{n}}}.
\end{equation} 

However, within a period $T$, the BBUs cannot record the updated content request distributions for all of the users as this will lead to a massive amount of data that is equal to $N\cdot U\cdot T$. The sublinear approach can use only a few updated content request distributions to approximate the actual average updated content request percentage. Moreover, the sublinear approach can control the deviation from the actual average updated content request percentage as well as the approximation error. Since the calculation of the percentage of each content is independent and the method of computing each content is the same, we introduce the percentage calculation of one given content. We define $\epsilon$ as the error that captures the deviation from the actual percentage of each content request. Let $\delta$ be a confidence interval which denotes the probability that the result of sublinear approach exceeds the allowed error interval. To clarify the idea, we present an illustrative example. For instance, assume that the actual result for the percentage of content $n$ is $\alpha=70\%$ with $\epsilon=0.03$ and $\delta=0.05$. This means that using a sublinear algorithm to calculate the percentage of content request of type $n$ can obtain a result whose percentage ranges from $67\%$ to $73\%$ with 95\% probability. Then, the relationship between the number of the updated content request distributions $N_n$ that a sublinear approach needs to calculate the percentage of content $n$, $\epsilon$, and $\delta$ can be given by \cite{Sublinear}: 
\begin{equation}\label{eq:sublinear}
\setlength{\abovedisplayskip}{0 pt}
\setlength{\belowdisplayskip}{3 pt}
N_{n}=- \frac{{\ln \delta }}{{2{\epsilon ^2}}}.
\end{equation}

From (\ref{eq:sublinear}), we can see that a sublinear algorithm can transform a statistical estimation of the expected value into a bound with error deviation $\epsilon$ and confidence interval $\delta$. After setting $\epsilon$ and $\delta$, the sublinear algorithm can just scan $N_n$ updated content request distributions to calculate the average percentage of each content. Based on the average updated content request percentage, the BBUs store the contents that have the high percentages.

\subsection{Proposed Framework based on ESN and Sublinear Approaches}
In this subsection, we formulate the proposed algorithm to solve the problem in (\ref{eq:sum}). First, the BBUs need to run the ESN algorithm to predict the  distribution of content requests and mobility pattern for each user as per Section \ref{section2}, and determine which content to store in RRH cache based on the average content request percentage of the associated users at each time slot. Then, based on the content request distribution of each user, the BBUs cluster the RRHs and sample the updated content request distributions to calculate the percentage of each content based on (\ref{eq:sublinear}). Finally, the BBUs uses the approximated average updated content request percentage to select the contents which have the high percentages to cache at cloud. Based on the above formulations, the algorithm based on ESNs and sublinear algorithms is shown in Algorithm \ref{al2}. Note that, in step 8 of Algorithm 1, a single RRH may belong to more than one cluster since its associated users may have different content request distribution. As an illustrative example, consider a system having two RRHs: an RRH $a$ has two users with content request distributions $\boldsymbol{p}_{a,1}$ and $\boldsymbol{p}_{a,2}$, an RRH $b$ has two users with content request distribution $\boldsymbol{p}_{b,1}$ and $\boldsymbol{p}_{b,3}$, and an RRH $c$ that is serving one user with content request distribution $\boldsymbol{p}_{c,2}$. If $\boldsymbol{p}_{a,1}=\boldsymbol{p}_{b,1}$ and $\boldsymbol{p}_{a,2}=\boldsymbol{p}_{c,2}$, the BBUs will group RRH $a$ and RRH $b$ into one cluster ($\boldsymbol{p}_{a,1}=\boldsymbol{p}_{b,1}$) and RRH $a$ and RRH $c$ into another cluster ($\boldsymbol{p}_{a,2}=\boldsymbol{p}_{c,2}$). In this case, the RRHs that are grouped into one cluster will have the highest probability to request the same contents.

In essence, 
caching the contents that have the high percentages means that the BBUs will encourage more users to receive the contents from the cache. From (\ref{eq:vBU}), we can see that storing the contents in the RRH cache and cloud cache can reduce the backhaul and fronthaul traffic of each content that is transmitted from the content server and BBUs to the users. Consequently, caching increases the backhaul rate $v_{BU}$ and $v_{FU}$ which will naturally result in a reduction of $\theta$ and an improvement in the effective capacity. We will show next that the proposed caching Algorithm 1 would be an optimal solution to the problem. 
\begin{algorithm}[t]\footnotesize
\caption{Algorithm with ESNs and sublinear algorithms }
\label{al2}
\begin{algorithmic} [1] 
\REQUIRE The set of users' contexts, $\boldsymbol{x}_{t}$ and $\boldsymbol{m}_{t}$;\\ 
\vspace{1pt}  
\ENSURE initialize $\boldsymbol{W}_j^{\alpha,in}$, $\boldsymbol{W}_j^{\alpha}$, $\boldsymbol{W}_j^{\alpha,out}$, $\boldsymbol{W}_j^{in}$, $\boldsymbol{W}_j$, $\boldsymbol{W}_j^{out}$, $\boldsymbol{y}_{j}=0$, $\boldsymbol{s}_{j}=0$, $\epsilon$, and $\delta$ \\ 
\FOR {time $T_\tau$}
\STATE update the output weight matrix $\boldsymbol{W}_{T_\tau+1,j}^{out}$ based on (\ref{eq:w2})
\STATE obtain prediction $\boldsymbol{s}_{T_\tau+1,j}$ based (\ref{eq:y2})
\FOR {time $\tau$} 
\STATE obtain prediction $\boldsymbol{y}_{\tau+1,j}$ based on (\ref{eq:es})
\STATE update the output weight matrix $\boldsymbol{W}_{\tau+1,j}^{\alpha,out}$ based on (\ref{eq:updatew})
\STATE determine which content to cache in each RRH based on (\ref{eq:Cr})   
\STATE cluster the RRHs 
\ENDFOR 
\STATE  calculate the content percentage for each content based on (\ref{eq:sublinear}) 
 \STATE determine which content to cache in cloud based on (\ref{eq:Cc})
 \ENDFOR 
\end{algorithmic}
\end{algorithm} 
For the purpose of evaluating the performance of the proposed Algorithm 1, we assume that the ESNs can predict the content request distribution and mobility for each user accurately, which means that the BBUs have the entire knowledge of the location and content request distribution for each user. 
Consequently, we can state the following theorem:

\begin{theorem}\label{theorem2}
\emph{Given the accurate ESNs predictions of the mobility and content request distribution for each user, the proposed Algorithm \ref{al2} will reach an optimal solution to the optimization problem in (\ref{eq:sum})}.
\end{theorem}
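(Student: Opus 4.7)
The plan is to decompose the optimization in (\ref{eq:sum}) into two greedy subproblems, one for the RRH caches and one for the cloud cache, and show that under exact ESN predictions the two selections made by Algorithm \ref{al2} coincide with the optimum of each subproblem, and hence of the joint problem.

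First, I would use Proposition \ref{pro1} together with (\ref{eq:E}) to establish a strict link ordering: for every user $i$ and every content $n$, a fixed end-to-end delay requirement translates into $\theta_{i,n}^{O}<\theta_{i,n}^{A}<\theta_{i,n}^{G}<\theta_{i,n}^{S}$, and since $E_{k,i}(\cdot)$ is monotonically decreasing in its QoS exponent, the effective capacity contributed by a request is largest when served from the local RRH cache, then from the cloud cache, then from a remote RRH cache, and smallest from the content server. This justifies the implicit rule in Algorithm \ref{al2} that a cached copy is always preferred to a non-cached one and that RRH caching should be decided before cloud caching.

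Second, I would freeze a time slot and, using the exact user-RRH association that follows from the perfect mobility prediction $\boldsymbol{s}_{t,j}$, show that the RRH-caching subproblem decouples across RRHs. For RRH $r$, the extra effective capacity earned by placing content $n$ in its cache is $\sum_{i\in\mathcal{U}_r}p_{rin}\bigl[E_{k,i}(\theta_{i,n,k}^{O})-E_{k,i}(\theta_{i,n,k}^{\mathrm{alt}})\bigr]$, where $\theta^{\mathrm{alt}}$ is the exponent of whatever fallback link would have delivered $n$. The aggregate objective is therefore a modular (additive) set function of $\mathcal{C}_r$ with the unit-size cardinality constraint $|\mathcal{C}_r|\leq C_r$, so the unique optimum is obtained by ranking contents by the weighted score $p_{rn}$ and taking the top $C_r$; this is exactly (\ref{eq:Cr}). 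An analogous argument, now applied to the updated distributions $\boldsymbol{p}'_{r,i}$ that zero out the entries already covered by the RRH caches, shows that (\ref{eq:Cc}) is optimal for the cloud-cache subproblem, because the residual sum effective capacity is again a modular set function in $\mathcal{C}_c$ whose per-item weight is the cloud-averaged residual probability $p'_n$.

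The main obstacle I expect is the implicit coupling introduced by the shared link rates $v_{BU}=v_B/N_B$ and $v_{FU}=v_F/N_F$: placing content $n$ at RRH $r$ reduces $N_F$ (and possibly $N_B$), which changes $\theta_{i,n}^{A},\theta_{i,n}^{S}$ for \emph{every} other user and therefore breaks, in principle, the clean per-RRH decomposition used above. I would close this gap by a monotonicity argument: any additional cache hit only weakly decreases $N_B$ and $N_F$, which by (\ref{eq:vBU}) and Proposition \ref{pro1} only weakly decreases all remaining QoS exponents and hence only weakly increases every residual $E_{k,i}(\cdot)$. The ranking that defines (\ref{eq:Cr}) and (\ref{eq:Cc}) is therefore preserved, and a standard exchange argument rules out any profitable deviation from the greedy solution, giving global optimality of Algorithm \ref{al2} for (\ref{eq:sum}).
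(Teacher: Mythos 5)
Your proposal follows essentially the same route as the paper's Appendix D: decompose the per-slot sum effective capacity into the four link contributions, observe that the RRH-cache term separates across RRHs into a modular (additive) objective under a cardinality constraint so that ranking by the weighted score $p_{rn}=\sum_{i\in\mathcal{U}_r}p_{rin}E_{k,i}(\theta^{O}_{i,n,k})/N_r$ and taking the top $C_r$ is optimal, i.e.\ (\ref{eq:Cr}), and then repeat the argument on the residual (updated) distributions $\boldsymbol{p}'$ averaged over the period $T$ to obtain (\ref{eq:Cc}). Where you differ is in being more honest about two things the paper elides. First, the paper simply absorbs the content-server and remote-RRH terms into a quantity $F$ and treats $E_{k,i}(\theta^{O})$ as a constant, without acknowledging that $F$ and the exponents $\theta^{A},\theta^{S},\theta^{G}$ depend on the cache configuration through $N_B$ and $N_F$ in (\ref{eq:vBU}); your third paragraph names this coupling explicitly and patches it with a monotonicity-plus-exchange argument, which is a genuine strengthening (the key point being that the ranking score for (\ref{eq:Cr}) involves only $\theta^{O}$, which is insensitive to $N_B,N_F$). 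Second, a small caution on your link ordering: Proposition \ref{pro1} gives $\theta^{O}<\theta^{A}<\theta^{G}$ unconditionally, but the position of $\theta^{S}$ relative to $\theta^{G}$ depends on comparing $2L/v_{BU}$ with $2L/v_{FU}$, so the strict chain $\theta^{G}<\theta^{S}$ you assert requires $v_{BU}<v_{FU}$; this does not affect the conclusion, since all that is needed is that the cached links dominate the uncached ones, but you should weaken that claim accordingly.
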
 
\begin{proof} See Appendix D.
\end{proof}
\vspace{-0.2cm}

\subsection{Complexity and Overhead of the Proposed Approaches} 
In terms of complexity, for each RRH cache replacement action, the cloud needs to implement $U$ ESN algorithms to predict the users' content request distribution. For each cloud caching update, the cloud needs to implement $U$ ESN algorithms to predict the users' mobility patterns. During each time duration for cached content replacement, $T_\tau$, the cached contents stored at an RRH cache will be replaced $\frac{T_\tau}{\tau}$ times. Therefore, the complexity of Algorithm 1 is $O(U \times \frac{T_\tau}{\tau})$. However, it is a learning algorithm which can build a relationship between the users' contexts and behavior. After the ESN-based algorithm builds this relationship, the ESN-based algorithm can directly output the prediction of the users' behavior without any additional training.
 Here, we note that the running time of the approach will decrease once the training process is completed. 

Next, we investigate the computational overhead of Algorithm 1, which is summarized as follows: a) {\em Overhead of users information transmission between users and the content server:} The BBUs will collect all the users' behavior information and the content server will handle the users' content request at each time slot. However, this transmission incurs no notable overhead because, in each time slot, the BBUs need to only input the users' information to the ESN and the cloud has to deal with only one content request for each user. b) {\em Overhead of content transmission for RRH caching update and cloud caching update:} The content servers need to transmit the most popular contents to the RRHs and BBUs. However, the contents stored at RRH cache and cloud cache are all updated during off-peak hours. At such off-peak hours, the fronthaul and backhaul traffic loads will already be low and, thus, having cache updates will not significantly increase the traffic load of the content transmission for caching. 
c) {\em Overhead of the proposed algorithm:} As mentioned earlier, the total complexity of Algorithm 1 is $O(U \times \frac{T_\tau}{\tau})$. Since all the algorithm is implemented at the BBUs which has high-performance processing capacity, the overhead of Algorithm 1 will not be significant. 

\section{Simulation Results} 
For simulations, the content request data that the ESN uses to train and predict content request distribution is obtained from \emph{Youku} of \emph{China network video index}\footnote{\vspace{-0.7cm}The data is available at \url{http://index.youku.com/}.}. The detailed parameters are listed in Table  \uppercase\expandafter{\romannumeral1}. The mobility data is measured from real data generated at the \emph{Beijing University of Posts and Telecommunications}. {Note that the content request data and mobility data sets are independent. To map the data, we record the students' locations during each day and map arbitrarily the students' locations to one user' content request activity from Youku.} The results are compared to three schemes \cite{Content}: a) optimal caching strategy with complete information, b) random caching with clustering, and c) random caching without clustering. All statistical results are averaged over 5000 independent runs. Note that, the benchmark algorithm a) is based on the assumption that the CRAN already knows the entire content request distribution and mobility pattern. Hereinafter, we use the term ``error" to refer to the sum deviation from the estimated distribution of content request to its real distribution. 

\begin{table}\footnotesize
\renewcommand\arraystretch{1}
 \setlength{\abovecaptionskip}{0.9pt}
 \caption{
    \vspace*{-0em} SYSTEM PARAMETERS}\vspace*{-0.3em}
\centering  
\newcommand{\tabincell}[2]{\begin{tabular}{@{}#1@{}}#1\end{tabular}}
\begin{tabular}{|c|c|c|c|}
\hline
\textbf{Parameters} & \textbf{Values} & \textbf{Parameters} & \textbf{Values}\\
\hline
$r$ & 1000 m & $P$ & 20 dBm\\
\hline
$R$ & 1000 & $\beta$& 4\\
\hline
$B$ & 1 MHz & $\lambda^\alpha$ & 0.01\\
\hline
$L$ & 10 Mbit & $S$ & 25\\
\hline
$\theta _s^O$& 0.05 & $T$ & 300\\
\hline
 $N_w$ & 1000 & $\sigma ^2$ & -95 dBm\\ 
\hline
 $C_c$,$C_r$ & 6,3 & $D_{\max }$ & 1\\ 
\hline
$K$&7&$N_s$&10\\
\hline
$\delta$ & 0.05&$\epsilon$ & 0.05\\
\hline
$H$& 3 & $\lambda$&0.5\\
\hline
$T_\tau$& 30&$\chi$&0.85\\
\hline
\end{tabular}
 \vspace{-0.3cm}
\end{table}

\begin{figure}[!t]
  \begin{center}
   \vspace{0cm}
    \includegraphics[width=7cm]{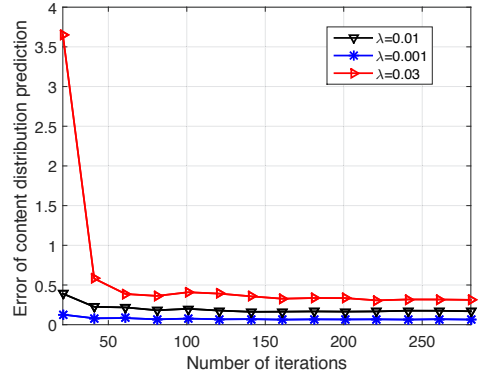}
    \vspace{-0.3cm}
    \caption{\label{Fig3} Error as the number of iteration varies.}
  \end{center}\vspace{-0.7cm}
\end{figure}

\begin{figure*}
\centerline{\subfigure[$N_{tr}=4500$ ]{\includegraphics[width=5.5cm]{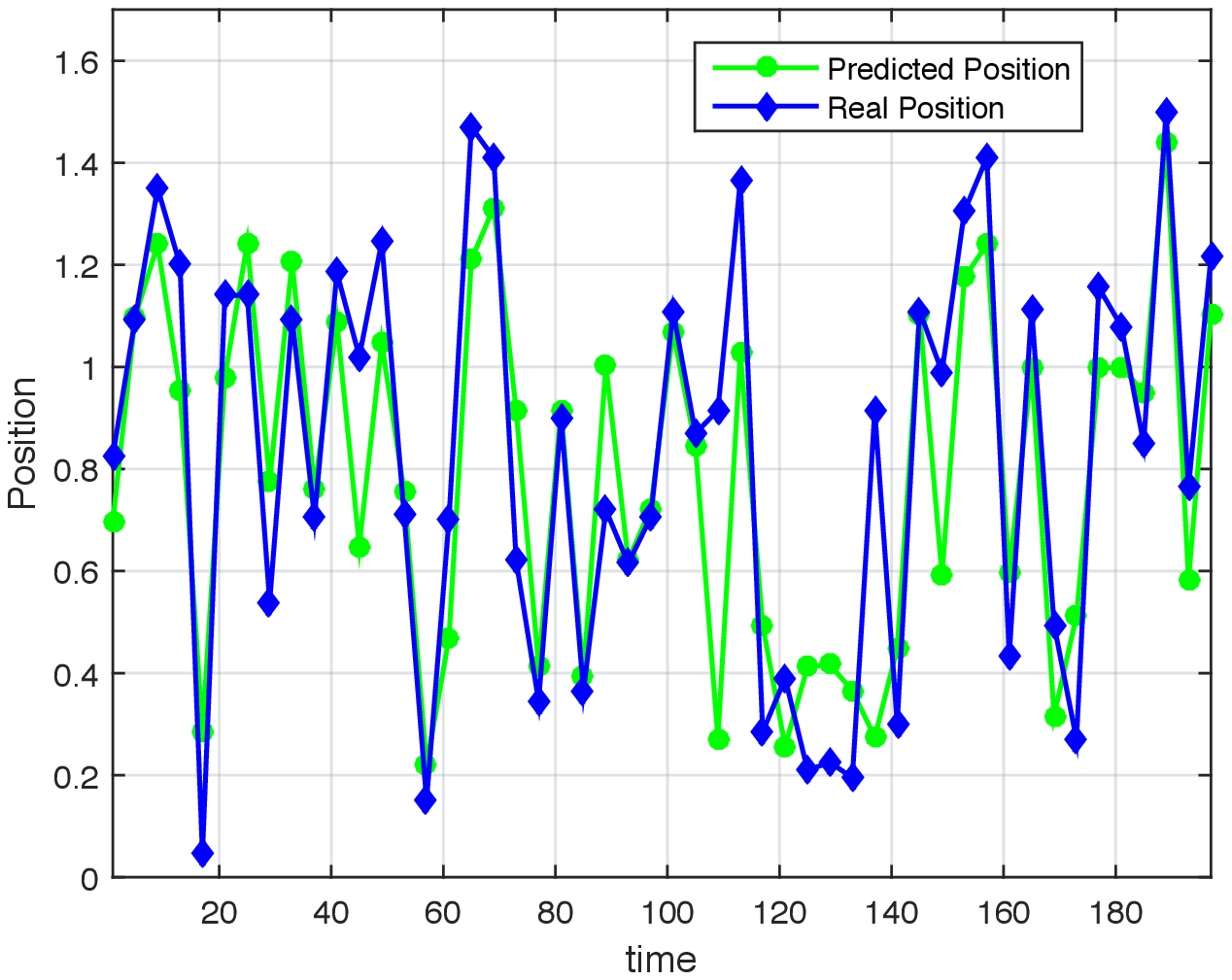}
\label{fig41}}\hspace{-0.65cm}
\subfigure[$N_{tr}=7500$]{\includegraphics[width=5.5cm]{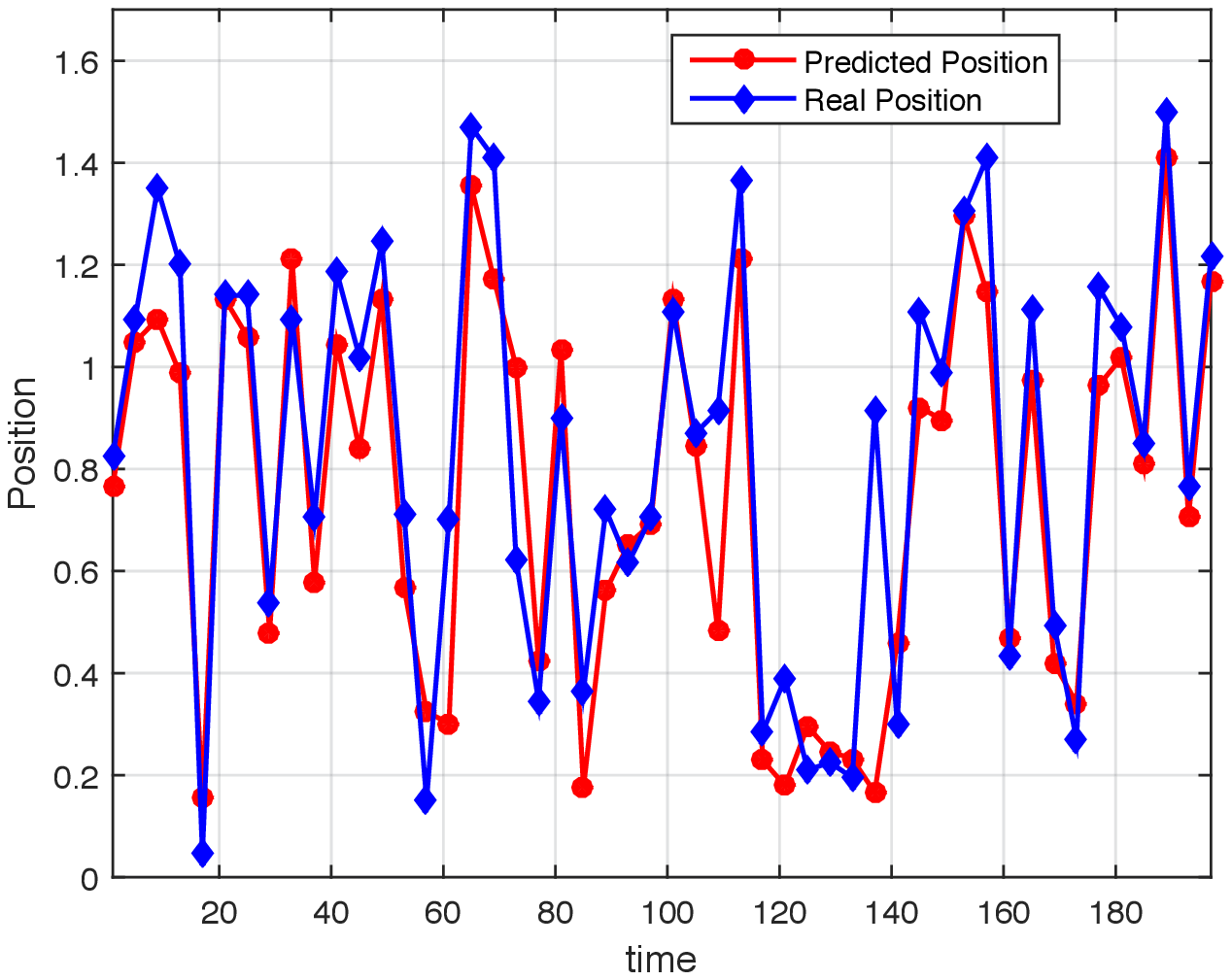}
\label{fig42}}\hspace{-0.65cm}
\subfigure[$N_{tr}=10500$]{\includegraphics[width=5.5cm]{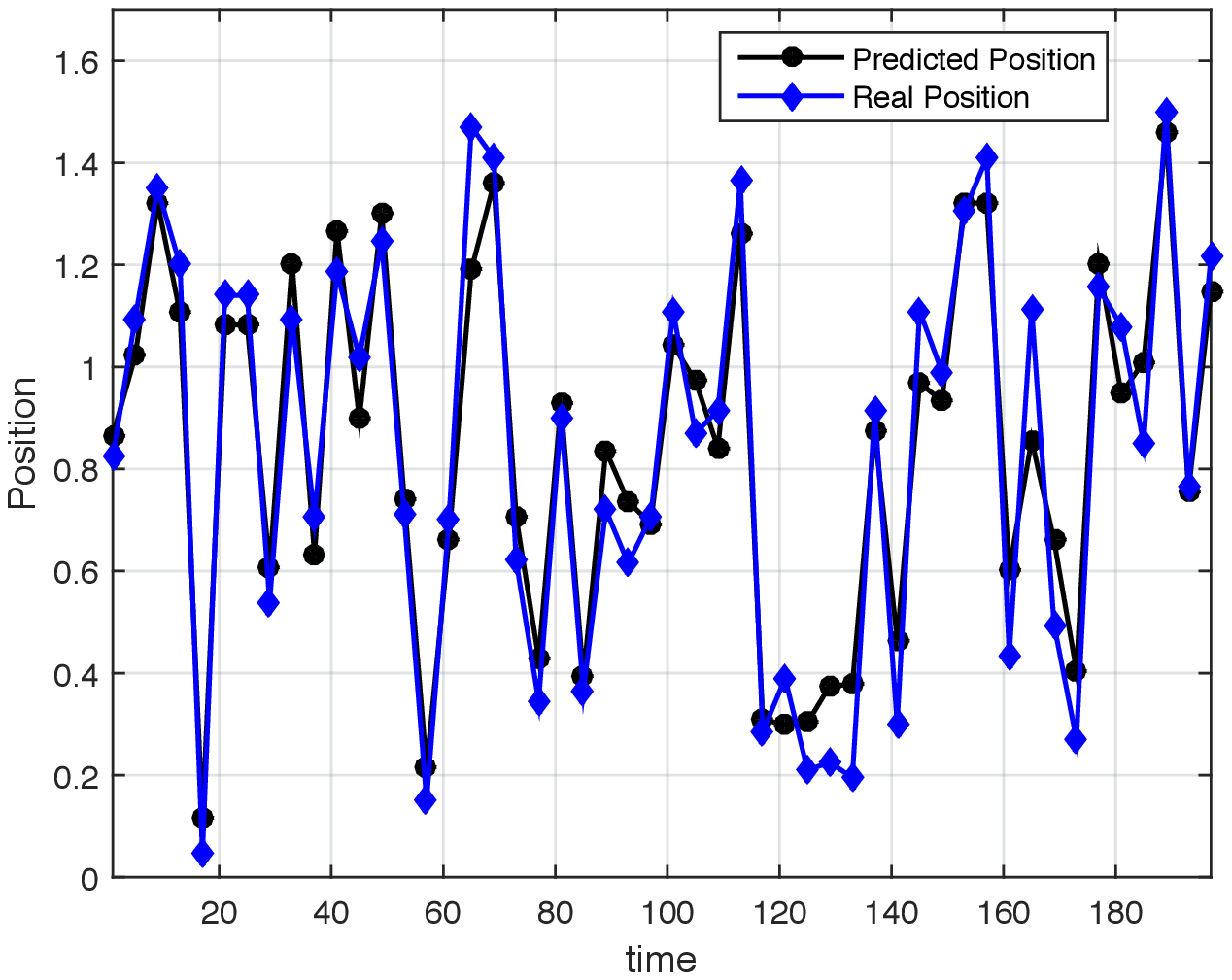}
\label{fig43}}}\vspace{-0.1cm}
\caption{\label{fig4}The ESNs prediction of the users mobility as the training dataset $N_{tr}$ varies.} \vspace{-0.3cm}
\end{figure*}

\begin{figure*}
\centerline{\subfigure[$W=300$ ]{\includegraphics[width=5.5cm]{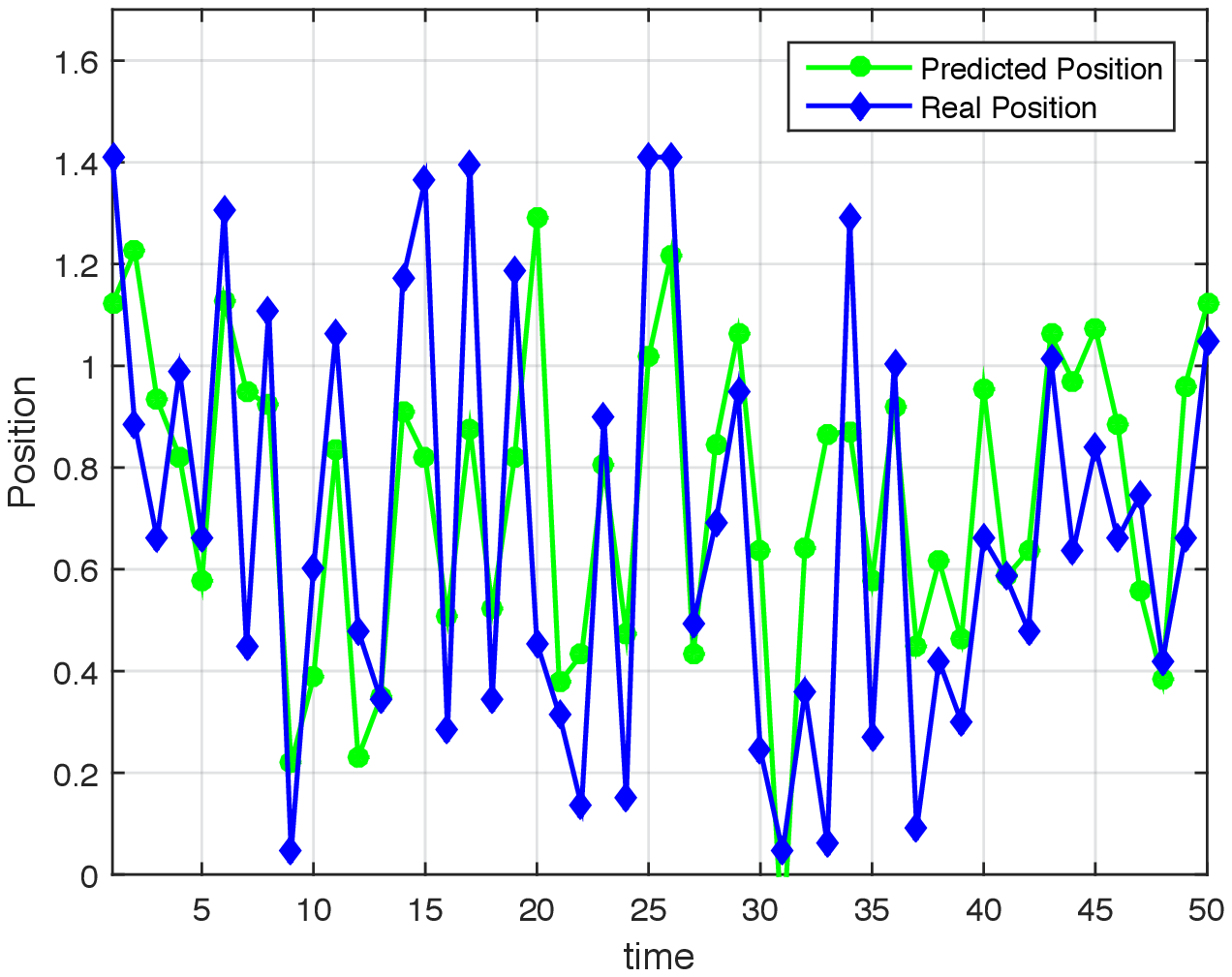}
\label{fig51}}\hspace{-0.65cm}
\subfigure[$W=800$]{\includegraphics[width=5.5cm]{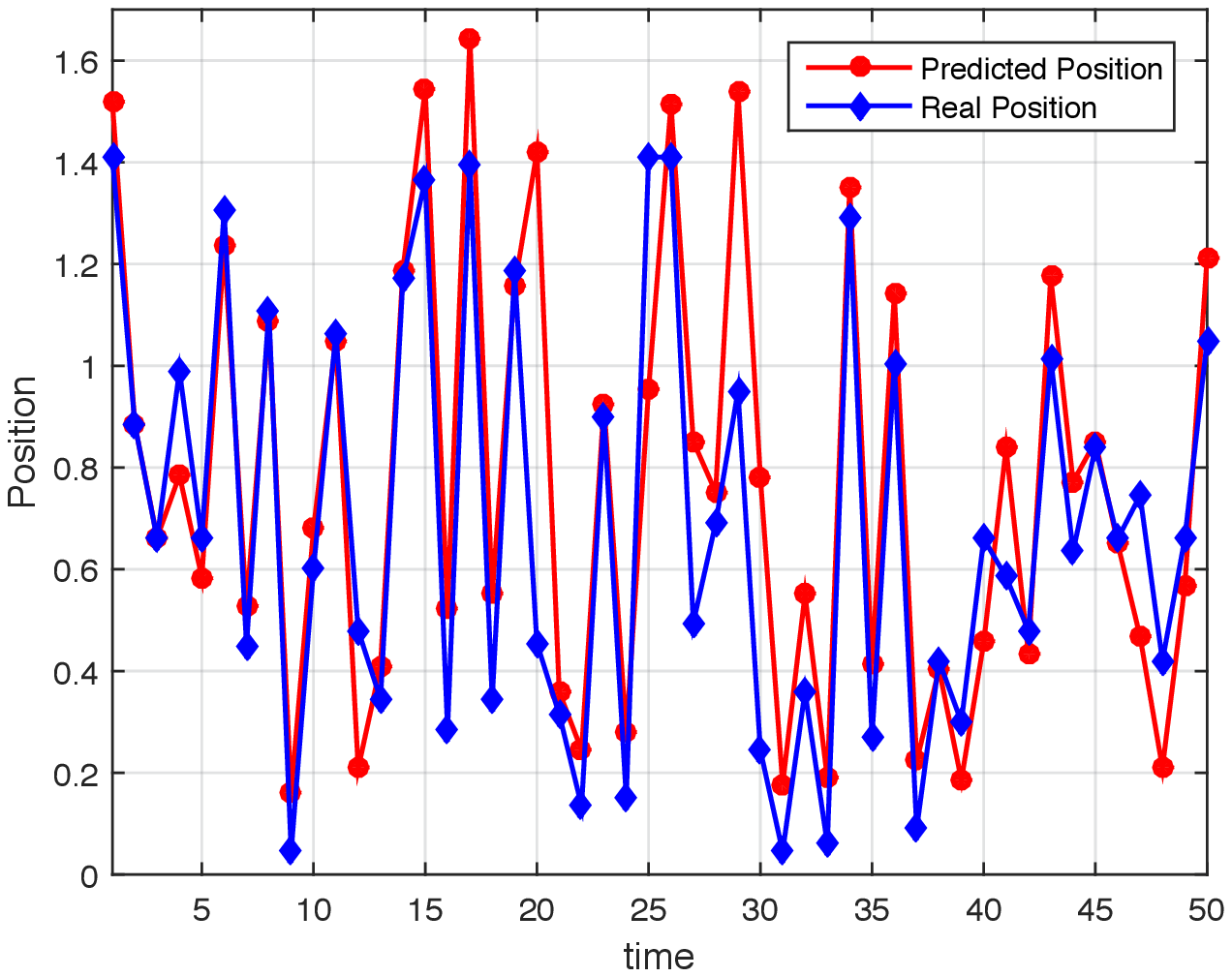}
\label{fig52}}\hspace{-0.65cm}
\subfigure[$W=3000$]{\includegraphics[width=5.5cm]{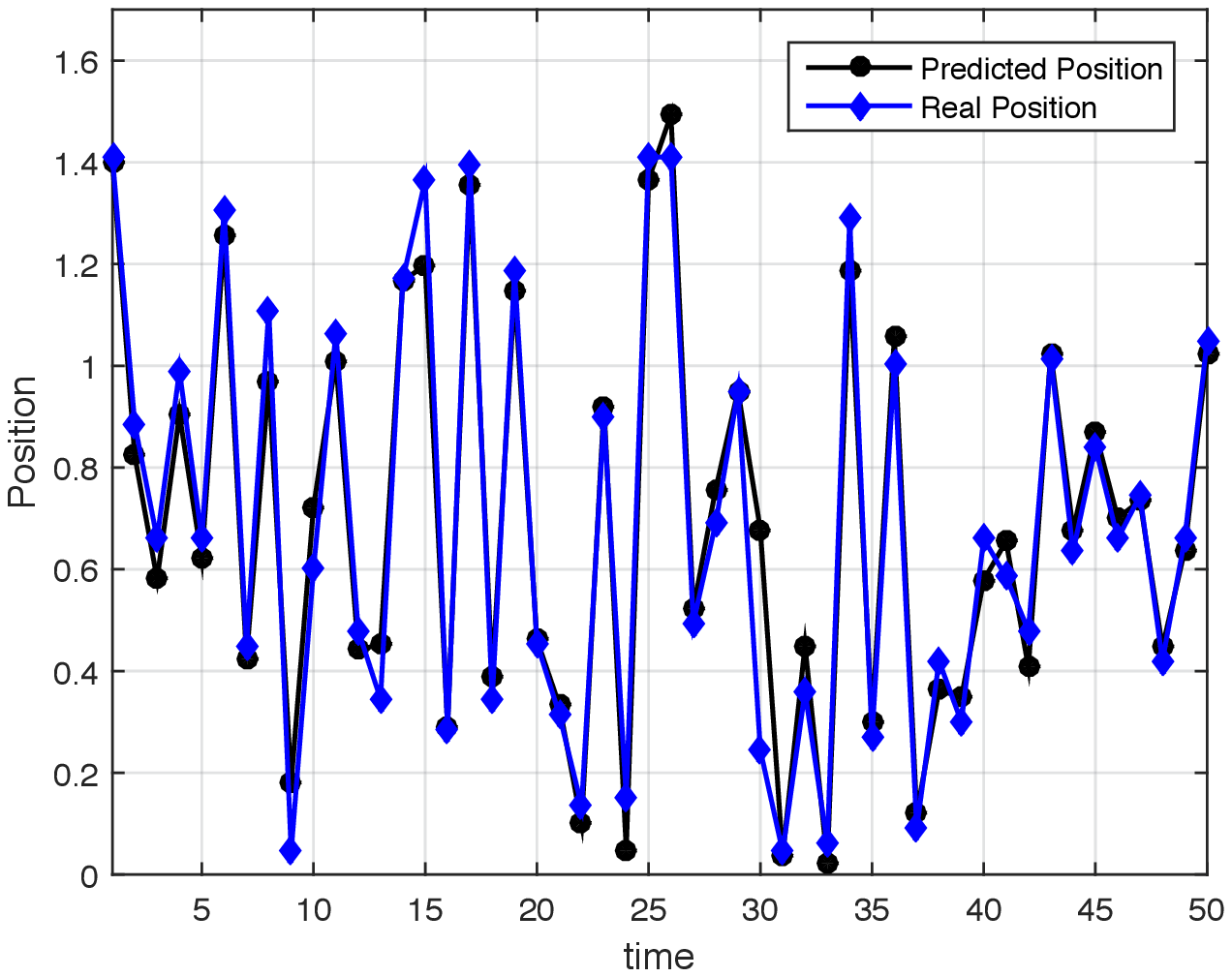}
\label{fig53}}}\vspace{-0.1cm}
\caption{\label{fig5}The ESNs prediction of the users mobility as the ESNs reservoir units varies.}\vspace{-0.3cm}
\end{figure*}

Fig. \ref{Fig3} shows how the error of the ESN-based estimation changes as the number of the iteration varies. In Fig. \ref{Fig3}, we can see that, as the number of iteration increases, the error of the ESN-based estimation decreases. Fig. \ref{Fig3} also shows that the ESN approach needs less than 50 iterations to estimate the content request distribution for each user. This is due to the fact that ESNs need to only train the output weight matrix. Fig. \ref{Fig3} also shows that the learning rates $\lambda^\alpha=0.01,0.001$, and $0.03$ result, respectively, in an error of $0.2\%, 0.1\%$, and $0.43\%$. Clearly, adjusting the learning rates at each iteration can affect the accuracy of the ESNs' prediction.

Figs. \ref{fig4} and \ref{fig5} evaluate the accuracy of using ESN for predicting the users' mobility patterns. First, in Fig. \ref{fig4}, we show how ESN can predict the users' mobility patterns as the size of the training dataset $N_{tr}$ (number of training data to train $\boldsymbol{W}^{out}$) varies. The considered training data is the user's context during a period. In Fig. \ref{fig4}, we can see that, as the size of the training dataset increases, the proposed ESN approach achieves more improvement in terms of the prediction accuracy. Fig. \ref{fig5} shows how ESN can predict users mobility as the number of the ESNs reservoir units $W$ varies. In Fig. \ref{fig5}, we can see that the proposed ESN approach achieves more improvement in terms of the prediction accuracy as the number of the ESNs reservoir units $W$ increases. This is because the number of the ESNs reservoir units $W$ directly affects the ESN memory capacity which directly affects the number of user positions that the ESN algorithm can record. Therefore, we can conclude that the choice of an appropriate size of  the training dataset and an appropriate number of the ESNs reservoir units are two important factors that affect the ESN prediction accuracy of the users' mobility patterns. 

Fig. \ref{learningfigure} shows how the prediction accuracy of a user in a period changes
as the number of the hidden units varies. Here, the hidden units of the ESN represents the size of the reservior units.
From Fig. \ref{learningfigure}, we can see that the prediction of the ESN-based learning algorithm is is more accurate compared to the deep learning algorithm and this accuracy improves as the number of the hidden units increases. In particular, the ESN-based algorithm can yield up to of 14.7\% improvement in terms of the prediction accuracy compared with a deep learning algorithm. This is due to the fact that the ESN-based algorithm can build the relationship between the prediction and the position that the user has visited which is different from the deep learning algorithm that just records the property of each user's locations. Therefore, the ESN-based algorithm can predict the users' mobility patterns more accurately. 
\begin{figure}
  \begin{center}
   \vspace{0cm}
    \includegraphics[width=7cm]{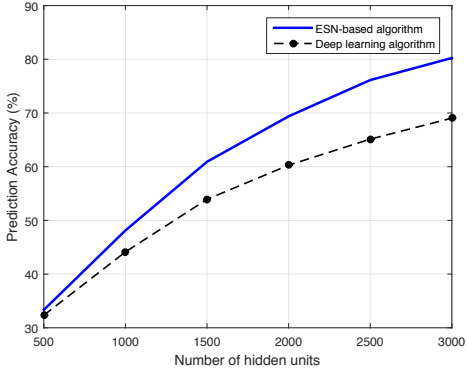}
    \vspace{-0.3cm}
    \caption{\label{learningfigure}Prediction accuracy of mobility patterns as the number of hidden units varies. {Here, we use the deep learning algorithm in \cite{nguyen2012extracting} as a benchmark. The total number of hidden units in deep learning is the same as the number of reservoir units in ESN.}}
  \end{center}\vspace{-0.8cm}
\end{figure}



\begin{figure}
  \begin{center}
   \vspace{0cm}
    \includegraphics[width=7cm]{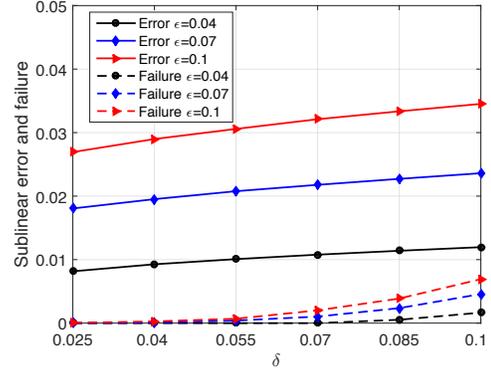}
    \vspace{-0.3cm}
    \caption{\label{Fig6} Error and failure as confidence and allowable error exponents vary.}
  \end{center}\vspace{-0.8cm}
\end{figure}

In Fig. \ref{Fig6}, we show how the failure and error of the content request distribution for each user vary with the confidence exponent $\delta$ and the allowable error exponent $\epsilon$. Here, the error corresponds to the difference between the result of the sublinear algorithm and the actual content request distribution while failure pertains to the probability that the result of our sublinear approach exceeds the allowable error $\epsilon$. From Fig. \ref{Fig6}, we can see that, as $\delta$ and $\epsilon$ increase, the probabilities of failure and error of the content request distribution also increase. This is due to the fact that, as $\delta$ and $\epsilon$ increase, the number of content request distribution samples that the sublinear approach uses to calculate the content percentage decreases. Fig. \ref{Fig6} also shows that even for a fixed $\epsilon$, the error also increase as $\delta$ increases. This is because, as $\delta$ changes, the number of content request distribution samples would also change, which increases the error. 

\begin{figure}
  \begin{center}
   \vspace{0cm}
    \includegraphics[width=7cm]{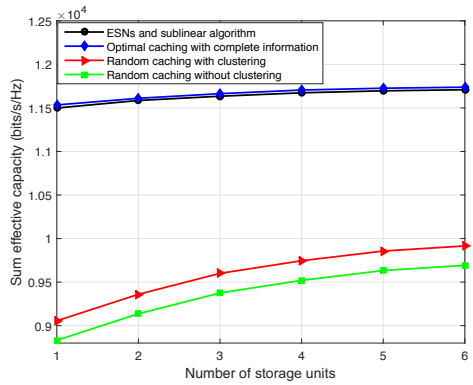}
    \vspace{-0.3cm}
    \caption{\label{Fig8} Sum effective capacity vs. the number of the storage units at cloud cache.}
  \end{center}\vspace{-0.7cm}
\end{figure}

Fig. \ref{Fig8} shows how the sum of the effective capacities of all users in a period changes as the number of the storage units at the cloud cache varies.
In Fig. \ref{Fig8}, we can see that, as the number of the storage units increases, the effective capacities of all considered algorithms increase since having more storages allows offloading more contents from the content server, which, in turn, will increase the effective capacity for each content. From Fig. \ref{Fig8}, we can see that the proposed algorithm can yield up to of $27.8\%$ and $30.7\%$ improvements in terms of the sum effective capacity compared with random caching with clustering and random caching without clustering for the case with one cloud cache storage unit. These gains are due to the fact that the proposed approach can store the contents based on the ranking of the average updated content request percentage of all users as computing by the proposed ESNs and sublinear algorithm.

%

\begin{figure}
  \begin{center}
   \vspace{0cm}
    \includegraphics[width=7cm]{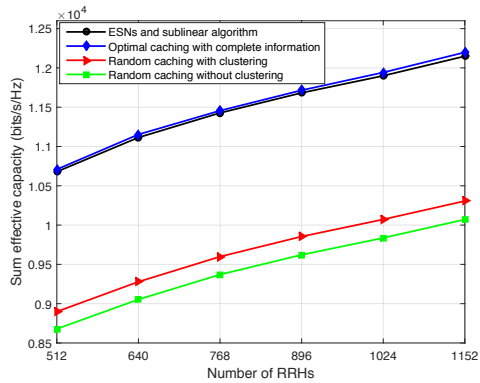}
    \vspace{-0.3cm}
    \caption{\label{Fig10} Sum effective capacity vs. the number of the RRHs.}
  \end{center}\vspace{-0.7cm}
\end{figure}

Fig. \ref{Fig10} shows how the sum of the effective capacities of all users in a period changes as the number of the RRHs varies. In Fig. \ref{Fig10}, we can see that, as the number of the RRHs increases, the effective capacities of all algorithms increase since having more RRHs reduces the distance from the user to its associated RRH. In Fig. \ref{Fig10}, we can also see that the proposed approach can yield up to 21.6\% and 24.4\% of improvements in the effective capacity compared to random caching with clustering and random caching without clustering, respectively, for a network with 512 RRHs. Fig. \ref{Fig10} also shows that the sum effective capacity of the proposed algorithm is only $0.7\%$ below the optimal caching scheme that has a complete knowledge of content request distribution, mobility pattern, and the real content request percentage. Clearly, the proposed algorithm reduces running time of up to $34\%$ and only needs 600 samples of content request to compute the content percentage while only sacrificing $0.7\%$ network performance.

\begin{figure}[!t]
  \begin{center}
   \vspace{0cm}
    \includegraphics[width=7cm]{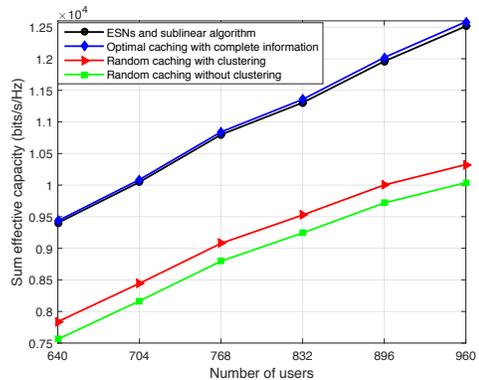}
    \vspace{-0.3cm}
    \caption{\label{Fig11} Sum effective capacity vs. the number of the users.}
  \end{center}\vspace{-0.7cm}
\end{figure}

Fig. \ref{Fig11} shows how the sum of the effective capacities of all users in a period changes as the number of the users varies. In Fig. \ref{Fig11}, we can see that, as the number of the users increases, the effective capacities of all considered algorithms increase as caching can offload more users from the backhaul and fronthaul links as the number of users increases. In Fig. \ref{Fig11}, we can also see that the proposed approach can yield up to 21.4\% and 25\% of improvements in the effective capacity compared, respectively, with random caching with clustering and random caching without clustering for a network with 960 users. This implies that the proposed ESN-based algorithm can effectively use the predictions of the ESNs to determine which content to cache. In Fig. \ref{Fig11}, we can also see that the deviation from the proposed algorithm to the optimal caching increases slightly when the number of users varies. This is due to the fact that the number of content request distributions that the proposed algorithm uses to compute the content percentage is fixed as the total number of content request distributions increases, which will affect the accuracy of the sublinear approximation.

\section{Conclusion}
In this paper, we have proposed a novel caching framework for offloading the backhaul and fronthaul loads in a CRAN system. We have formulated an optimization problem that seeks to maximize the average effective capacities. To solve this problem, we have developed a novel algorithm that combines the machine learning tools of echo state networks with a sublinear caching approach. The proposed algorithm enables the BBUs to predict the content request distribution of each user with limited information on the network state and user context. The proposed algorithm also enables the BBUs to calculate the content request percentage using only a few samples. Simulation results have shown that the proposed approach yields significant performance gains in terms of sum effective capacity compared to conventional approaches. 

\section*{Appendix} \linespread{0.99}
\subsection{Proof of Proposition \ref{pro1}} \label{Ap:a} 
Based on (\ref{eq:thetaD}), the relationship between $\theta_{i,n}^S$ and $\theta_{i,n}^O$ will be:
\begin{equation}\label{eq:1theta}
\setlength{\abovedisplayskip}{5 pt}
\setlength{\belowdisplayskip}{5 pt}
\frac{1}{{\theta _{i,n}^S}} = \frac{1}{{\theta _{i,n}^O}} - \frac{{{{{N_h}L} \mathord{\left/
 {\vphantom {{{N_h}L} v}} \right.
 \kern-\nulldelimiterspace} v}}}{{ - \log \Pr \left( {D > {D_{\max }}} \right)}}.
\end{equation}
Substituting (\ref{eq:PrD}) into (\ref{eq:1theta}), we obtain:
\begin{equation}
\setlength{\abovedisplayskip}{5 pt}
\setlength{\belowdisplayskip}{5 pt}
\frac{1}{{\theta _{i,n}^S}} = \frac{1}{{\theta _{i,n}^O}}\left(1 - \frac{{{N_h}L}}{{{D_{\max }}v}}\right).
\end{equation}
Based on Proposition 5 in \cite{Effectivecapacity}, for the transmission link a), we can take the backhaul transmission rate $v_{BU}$ as the external rate, and consequently, the link hops $N_h$ consists of the link from the BBUs to the RRHs and the link from the RRHs to the users ($N_h=2$). We complete the proof for link a). For link b) and link d), we ignore the delay and QoS losses of the transmission rates from the caches to the BBUs and RRHs, and consequently, the link hops of b) and d) are given as $N_h=1$ and $N_h=2$. The other proofs are the same as above.  

\subsection{Proof of Theorem \ref{theorem1}} \label{ap:b} 
Given an input stream $\boldsymbol{m}(\ldots t)=\ldots m_{t-1}m_t$, where $m_t$ follows the same distribution as $m_{t-W}$, we substitute the input stream $\boldsymbol{m}(\ldots t)$ into (\ref{eq:reservoirstate}), then we obtain the states of the reservoir units at time $t$:
\begin{equation}\small\nonumber
\setlength{\abovedisplayskip}{5 pt}
\setlength{\belowdisplayskip}{5 pt}
\begin{split}
{v_{t,1}} =\; & w_1^{in}{m_t} + w_W^{in}{m_{t - 1}}w +  \cdots  + w_2^{in}{m_{t - (W -1)}}{w^{W - 1}} +  \cdots\\
 &+ w_1^{in}{m_{t - W}}{w^N} +  \cdots  + w_2^{in}{m_{t - (2W - 1)}}{w^{2W - 1}}\\
 &+ w_1^{in}{m_{t - 2W}}{w^{2W}} +  \cdots\\
{v_{t,2}} =\; & w_2^{in}{m_t} + w_1^{in}{m_{t - 1}}w +  \cdots  + w_{3}^{in}{m_{t - (W -1)}}{w^{W - 1}} +  \cdots\\
 &+ w_2^{in}{m_{t - W}}{w^W} +  \cdots  + w_3^{in}{m_{t - (2W - 1)}}{w^{2W - 1}}\\
 &+ w_2^{in}{m_{t - 2N}}{w^{2W}} +  \cdots\\  
\end{split}
\end{equation}

Here, we need to note that the ESN having the ability to record the location that the user has visited at time $t-k$ denotes the ESN can output this location at time $t$. Therefore, in order to output $m_{t-k}$ at time $t$, the optimal output matrix $\boldsymbol{W}_j^{out}$ is given as \cite{Short}:
\begin{equation}
\setlength{\abovedisplayskip}{5 pt}
\setlength{\belowdisplayskip}{5 pt}
{\boldsymbol{W}_j^{out}} =\left(\mathbb{E}{\left[ {\boldsymbol{v}_{t,j}{\boldsymbol{v}_{t,j}^{\rm T}}}\right]^{ - 1}}\mathbb{E}\left[ {\boldsymbol{v}_{t,j}m_{t - k} } \right]\right)^{\rm T},
\end{equation}
where $\mathbb{E}{\left[ {\boldsymbol{v}_{t,j}{\boldsymbol{v}_{t,j}^{\rm T}}}\right]}$ is the covariance matrix of $\boldsymbol{W}_{j}^{in}$. Since the input stream is periodic and zero expectation, each element $\mathbb{E}{\left[ {v_{t,i}{v_{t,j}}}\right]}$ of this matrix will be:
\begin{equation}
\setlength{\abovedisplayskip}{5 pt}
\setlength{\belowdisplayskip}{5 pt}
\begin{split}
\mathbb{E}\left[ {{v_{t,i}}{v_{t,j}}}\right]& = w_i^{in}w_j^{in}\sigma _t^2 + w_{i-1(\bmod)W}^{in}w_{j-1(\bmod)W}^{in}\sigma _{t - 1}^2{w^2} \\ &\;\;\;\;\;+  \cdots  + w_i^{in}w_j^{in}\sigma _{t - W}^2{w^{2W}} +  \cdots \\
&= w_i^{in}w_j^{in}\sigma _t^2\sum\limits_{j = 0}^\infty  {{w^{2Wj}}} +  \cdots  \\ &+w_{i-(W-1)}^{in}w_{j-(W-1)}^{in}\sigma _{t - (W-1)}^2\sum\limits_{j = 0}^\infty  {{w^{2Wj + 2(W - 1)}}} \\
&={\boldsymbol{\Omega} _i}\boldsymbol{\Gamma} {\boldsymbol{\Omega}_j^{\rm T}},
\end{split}
\end{equation}
where
\begin{equation}\small\nonumber
\setlength{\abovedisplayskip}{5 pt}
\setlength{\belowdisplayskip}{5 pt}
\boldsymbol{\Gamma}=\left[\!\! {\begin{array}{*{20}{c}}
{\sigma _t^2\sum\limits_{j = 0}^\infty  {{\mathbb{E}\left[w^{2Wj}\right]}} }&0&0\\
0& \ddots &0\\
0&0&{\sigma _{t - (W - 1)}^2\sum\limits_{j = 0}^\infty  {{\mathbb{E}[w^{2Wj + 2(W - 1)}]}} }
\end{array}} \!\right],
\end{equation}
 $\boldsymbol{\Omega}_j$ indicates row $j$ of $\boldsymbol{\Omega}$, $v_{t,j}$ is the element of $\boldsymbol{v}_{t,j}$, and $\sigma _{t -k}^2$ is the variance of $m_{t-k}$. Consequently, $\mathbb{E}{\left[ {\boldsymbol{v}_{t,j}{\boldsymbol{v}_{t,j}^{\rm T}}}\right]}={\boldsymbol{\Omega}}\boldsymbol{\Gamma} {\boldsymbol{\Omega}^{\rm T}}$, $\mathbb{E}\left[ {\boldsymbol{v}_{t,j}m_{t - k} }\right]=\mathbb{E}\left[{w^k}\right]\sigma _{t - k}^2{\boldsymbol{\Omega}_{k+1(\bmod)W}^{\rm T}}$ and ${\boldsymbol{W}^{out}} = \mathbb{E}\left[{w^k}\right]\sigma _{t - k}^2\boldsymbol{\Omega} _{k + 1(\bmod)W}{({\boldsymbol{\Omega}}\boldsymbol{\Gamma} \boldsymbol{\Omega}^{\rm T} )^{ - 1}}$. Based on these formulations and (\ref{eq:y2}), the ESN output at time $t$ will be ${s_{t,j}} = \boldsymbol{W}^{out}{\boldsymbol{v}_{t,j}}=\mathbb{E}\left[{w^k}\right]\sigma _{t - k}^2\boldsymbol{\Omega} _{k + 1(\bmod)W}{\left({\boldsymbol{\Omega}}\boldsymbol{\Gamma} \boldsymbol{\Omega}^{\rm T} \right)^{ - 1}}{\boldsymbol{v}_{t,j}}$.
Consequently, the covariance of ESN output $s_{t,j}$ with the actual input $m_{t-k,j}$ is given as:
\begin{equation}\small\nonumber
\setlength{\abovedisplayskip}{5 pt}
\setlength{\belowdisplayskip}{5 pt}
\begin{split}
&\textrm{Cov}\left({s_{t,j}},{m_{t - k,j}}\right) \\
&=\mathbb{E}\left[{w^k}\right]\sigma _{t - k}^2\boldsymbol{\Omega} _{k + 1(\bmod)W}{\left({\boldsymbol{\Omega}}\boldsymbol{\Gamma} \!\!\boldsymbol{\Omega}^{\rm T} \right)^{ - 1}}\mathbb{E}\left[{\boldsymbol{v}_{t,j}},{m_{t - k}}\right],\\
&=\mathbb{E}\left[{w^k}\right]^2\!\!\sigma _{t - k}^4\!\left(\!\boldsymbol{\Omega} _{k + 1(\!\bmod\!)W}{\left(\!\boldsymbol{\Omega}^{\rm T}\!\right)^{\!-1}}\right)\!\boldsymbol{\Gamma}^{-1}\!\!\left(\boldsymbol{\Omega}^{-1}\boldsymbol{\Omega} _{k + 1(\!\bmod\!)W}^{\rm T}\!\right),\\
&\mathop  = \limits^{(a)} \mathbb{E}\left[{w^k}\right]^2\sigma _{t - k}^2{\left(\sum\limits_{j = 0}^\infty  {\mathbb{E}\left[{w^{2Wj + 2k(\bmod)W}}\right]} \right)^{ - 1}},
\end{split}
\end{equation}
where $(a)$ follows from the fact that ${\boldsymbol{\Omega} _{k + 1(\bmod )W}} = \boldsymbol{e}_{k + 1}^{\rm T}\boldsymbol{\Omega}^{\rm T}$ and ${\boldsymbol{e}_{k + 1}} = (0, \ldots ,{1_{k + 1}},0 \ldots 0)^{\rm T} \in {\mathbb{R}^{W}}$. Therefore, the memory capacity of this ESN is given as \cite{Minimum}:
\begin{equation}\small\nonumber
\setlength{\abovedisplayskip}{5 pt}
\setlength{\belowdisplayskip}{5 pt}
\begin{split}
M &\!= \sum\limits_{k =0}^\infty  {\mathbb{E}\!{{\left[{w^k}\right]}^2}{{\!\left(\sum\limits_{j = 0}^\infty  {\mathbb{E}\!\left[{w^{2Wj + 2k(\!\bmod\!)W}}\right]}\! \right)^{\!\!-1}}}}\!\!\!\!\!\!-\!{{\left(\sum\limits_{j = 0}^\infty  {\mathbb{E}\!\left[{w^{2Wj}}\right]} \!\right)^{\!\!-1}}}\!\!\!\!,\\
&\!\!\!\!\!\!\!=\sum\limits_{k = 0}^{W - 1} \!{\mathbb{E}{{\left[{w^k}\right]}^2}{{\left(\sum\limits_{j = 0}^\infty \! {\mathbb{E}\left[{w^{2Wj + 2k}}\right]} \right)^{ - 1}}}}\!\!\!\!\\
&\!\!\!\!\!\!\!+\!\!\!\!\!\sum\limits_{k = W}^{2W - 1} \!\!{\mathbb{E}{{\left[{w^k}\right]}^2}{{\!\left(\sum\limits_{j = 0}^\infty{\mathbb{E}\!\left[{w^{2Wj + 2k(\bmod)W}}\right]} \!\right)^{ \!\!\!- 1}}}} \!\!\!\!\!+ \! \cdots\!-\!{{\left(\sum\limits_{j = 0}^\infty \! {\mathbb{E}\!\left[{w^{2Wj}}\right]}\!\!\right )^{\!\!-1}}}\!\!\!\!\!,\\
&\!\!\!\!\!\!\!=\!\!\sum\limits_{k = 0}^{W - 1} {{{\!\!\left(\sum\limits_{j = 0}^\infty  {\mathbb{E}\!\!\left[{w^{2Wj + 2k}}\!\right]}\!\! \right)^{ \!\!\!- 1}}}\!\!\sum\limits_{j = 0}^\infty  {\mathbb{E}{{\left[{w^{Wj + k}}\right]}^2}} }\! \!-\!{{\left(\sum\limits_{j = 0}^\infty {\mathbb{E}\left[{w^{2Wj}}\right]} \right)^{\!\!-1}}}\!\!.
\end{split}
\end{equation}
This completes the proof.

 \subsection{Proof of Proposition \ref{pro3}} \label{Ap:c} 
 For \romannumeral1), we first use the distribution that $P\left(w = a \right) = 0.5$ and $P\left(w = -a\right) = 0.5$ to formulate the memory capacity, where $a\in (0,1)$. Then, we discuss the upper bound. Based on the distribution property of $w$, we can obtain that $\mathbb{E}\left[w^{2W}\right]=a^{2W}$ and $\mathbb{E}\left[w^{2W+1}\right]=0$. The memory capacity is given as:
\begin{equation}\small\label{eq:M}
\setlength{\abovedisplayskip}{5 pt}
\setlength{\belowdisplayskip}{5 pt}
\begin{split}
M &\!=\! \sum\limits_{k = 0}^{W - 1} {{{\!\!\left(\sum\limits_{j = 0}^\infty  {\mathbb{E}\!\left[{w^{2Wj + 2k}}\right]}\! \right)^{\!\!\! - 1}}}\!\!\!\sum\limits_{j = 0}^\infty  {\mathbb{E}{{\left[{w^{Wj + k}}\right]}^2}} }\!\!\!-\!{{\left(\sum\limits_{j = 0}^\infty {\mathbb{E}\!\left[{w^{2Wj}}\right]} \!\!\right)^{\!\!\!-1}}}\!\!,\\
&\!\!\!\!\!\!\!\!=\!\!\!\sum\limits_{k = 0}^{W - 1} {{{\!\!\left(\sum\limits_{j = 0}^\infty  {{a^{2Wj + 2k}}} \!\right)^{\!\!\! - 1}}}\!\!\sum\limits_{j = 0}^\infty  {{{{a^{2Wj + 2k}}}}} }\! \!-\!{{\left(\sum\limits_{j = 0}^\infty {{a^{2Wj}}} \!\right)^{\!\!\!-1}}}\!\!\!, \left(\!\text{$k$ is an even}\right),\\
&\!\!\!\!\!\!\!\!=\!\sum\limits_{k = 0}^{\left\lfloor {\frac{W}{2}} \right\rfloor } 1  - \left(1 - {a^{2W}}\right)={\left\lfloor {\frac{W}{2}} \right\rfloor }+a^{2W} <{\left\lfloor {\frac{W}{2}} \right\rfloor }+1.
\end{split}
\end{equation}
From (\ref{eq:M}), we can also see that the memory capacity $M$ increases as both the moment $\mathbb{E}\left[w^k\right]$ and $a$ increase, $k \in \mathbb{Z}^{+}$. This completes the proof of \romannumeral1).  For case \romannumeral2), we can use a similar method to derive the memory capacity exploiting distribution that $P\left(w = a\right) = 1$ and consequently, $\mathbb{E}\left[w^{k}\right]=a^k$, this yielding $M=W-1+a^{2W}$. Since $a \in (0,1)$, $M<W$ which is also correspondent to the existing work \cite{Short}.

\subsection{Proof of Theorem \ref{theorem2}} \label{ap:d} 
The problem based on (\ref{eq:sum}) for each time slot can be rewritten as:
\begin{equation}\label{proofEs}
\setlength{\abovedisplayskip}{5 pt}
\setlength{\belowdisplayskip}{5 pt}
{\bar E} =\frac{1}{{{T }}}\sum\limits_{k = 1}^{{T }} \sum\limits_{i \in \mathcal{U}} { {E_{k,i}\left(\theta _{{i,n_{ik},k}}^j\right)} },
\end{equation}
where $j \in \left\{ {O,A,S,G} \right\}$.
Denote $\boldsymbol{p}_{k,i}=\left[p_{ki1},p_{ki2},\dots,p_{kiN}\right]$ as the content request distribution of user $i$ at time slot $k$, the average effective capacities of the users is given by:
\begin{equation}\small\label{proofEsn}
\begin{split}
\setlength{\abovedisplayskip}{5 pt}
\setlength{\belowdisplayskip}{5 pt}
{\widetilde E_{k}} &=\!\! \sum\limits_{i \in \mathcal{U}} \!\!\left(\sum\limits_{n_{ik} \in {\mathcal{C}_{i}}} \!{{\!p_{kin_{ik}}}} E_{k,i}\!\left(\theta _{i,n_{ik},k}^O\right) \!\!+ \!\! \!\! \!\!\sum\limits_{n_{ik} \in {{{\mathcal{C}_c}} \mathord{\left/
 {\vphantom {{{C_c}} {{C_k}}}} \right.
 \kern-\nulldelimiterspace} {{\mathcal{C}_i}}}} \!\!\!\!\!{{\!p_{kin_{ik}}}} E_{k,i}\!\left(\theta _{i,n_{ik},k}^A\right)\right) \!\!
 \\&+\sum\limits_{i \in \mathcal{U}} \!\!\left(\! \sum\limits_{n_{ik} \in \mathcal{N}'}\!\!{{p_{kin_{ik}}}} E_{k,i}\!\left(\theta _{i,n_{ik},k}^S\right) \!\!+\!\!\!\! \!\!\sum\limits_{n_{ik} \in {{\mathcal{C}'_i}}} \!{{p_{kin_{ik}}}} E_{k,i}\!\left(\theta _{i,n_{ik},k}^G\right)\!\right) ,
 \end{split}
\end{equation}
where $\mathcal{C}_i$ is the set of RRH cache that is associated with user $i$, $\mathcal{N}'$ and $\mathcal{C}'_i$ represent, respectively, the contents that the BBUs arrange to transmit from the content server and remote RRHs cache. Since the transmission from the content server and the remote RRH cache can be scheduled by the BBUs based on Proposition \ref{pro1}, we only need to focus on the transmissions from the cloud cache and RRHs cache to the users which results in the average effective capacities of the users at time slot $k$ as follows: 
\begin{equation}\small\label{proofEsnA}
\setlength{\abovedisplayskip}{5 pt}
\setlength{\belowdisplayskip}{5 pt}
\begin{split}
{\widetilde E_{k}} &=\sum\limits_{i \in \mathcal{U}}    \sum\limits_{n_{ik} \in {\mathcal{C}_{i}}} {{p_{kin_{ik}}}} E_{k,i}\left(\theta _{i,n_{ik},k}^O\right) \\
&\;\;\;\;\;\;\;\;\;\;\;\;\;\;\;\;\;\;\;\;\;\;\;\;\;\;\;\;\;\;+\sum\limits_{i \in \mathcal{U}} \sum\limits_{n_{ik} \in {{{\mathcal{C}_c}} \mathord{\left/
 {\vphantom {{{C_c}} {{C_k}}}} \right.
 \kern-\nulldelimiterspace} {{\mathcal{C}_i}}}} {{p_{kin_{ik}}}} E_{k,i}\left(\theta _{i,n_{ik},k}^A\right)+F,\\
&= { \sum\limits_{r \in \mathcal{	R}}\sum\limits_{i \in \mathcal{U}_r}{\sum\limits_{n_{ik} \in {\mathcal{C}_{r}}} {{p_{kin_{ik}}}} E_{k,i}\left(\theta _{i,n_{ik},k}^O\right)} }  \\
&\;\;\;\;\;\;\;\;\;\;\;\;\;\;\;\;\;\;\;\;\;\;\;\;\;\;\;\;\;\;+\sum\limits_{i \in \mathcal{U}}  \sum\limits_{n_{ik} \in {\mathcal{C}_c}} {{{p_{kin_{ik}}}E_{k,i}\left(\theta _{i,n_{ik},k}^A\right)} } +F,
\end{split}
\end{equation} 
where 
\begin{equation}\small\nonumber
\!F\!=\!\sum\limits_{i \in \mathcal{U}}\!\sum\limits_{n_{ik} \in \mathcal{N}'}\!\! {{\!\!p_{kin_{ik}}}} E_{k,i}(\theta _{i,n_{ik},k}^S)+\! \!\sum\limits_{i \in \mathcal{U}}\sum\limits_{n_{ik} \in {{\mathcal{C}'_i}}} \!\!{{\!p_{kin_{ik}}}} E_{k,i}(\theta _{i,n_{ik},k}^G).
\end{equation} 
 Since $E_{k,i}(\theta _{i,n_{ik},k}^O)$ depends only on $\theta _{i,n_{ik},k}^O$, we can consider it as a constant during time slot $k$ and consequently, we only need to optimize ${\sum\limits_{i \in \mathcal{U}_r}\! {\sum\limits_{n_{ik} \in {\mathcal{C}_{r}}} \!\!\!{{p_{kin_{ik}}}} E_{k,i}(\theta _{i,n_{ik},k}^O)} }$ for each RRH. Therefore, we can select the content that has the maximal value of ${\sum\limits_{i \in \mathcal{U}_r}\! {{{p_{kin_{ik}}}} E_{k,i}(\theta _{i,n_{ik},k}^O)} }$, which corresponds to the proposed RRH caching method in Section \ref{al:sub}. 

Since the contents that are stored in the cloud cache are updated during a period $T$, the optimization of the cloud cache based on (\ref{proofEs}) and (\ref{proofEsnA}) is given as:
\begin{equation}
\setlength{\abovedisplayskip}{4 pt}
\setlength{\belowdisplayskip}{4 pt}
E_{c}=\max\frac{1}{{{T }}}\sum\limits_{k = 1}^{{T }}\sum\limits_{i \in \mathcal{U}}\sum\limits_{n_{ik} \in {{{\mathcal{C}_c}} \mathord{\left/
 {\vphantom {{{C_c}} {{C_k}}}} \right.
 \kern-\nulldelimiterspace} {{\mathcal{C}_i}}}} {{{p_{kin_{ik}}}E_{k,i}(\theta _{i,n_{ik},k}^A)} }.
\end{equation}
Here, the average of the effective capacity is over different contents transmission.
 After obtain the updated content request distribution of each user, we can use the same method to prove that the proposed algorithm can reach to the optimal performance.

\bibliographystyle{IEEEbib}
\bibliography{references}

\begin{thebibliography}{10}

\bibitem{EchoStateNetworks}
M.~Chen, W.~Saad, C.~Yin, and M.~Debbah,
\newblock ``Echo state networks for proactive caching and content prediction in
  cloud radio access networks,''
\newblock in {\em Proc. of IEEE Global Communications Conference (GLOBECOM)
  Workshops}, Washington, DC, USA, December 2016.

\bibitem{2}
M.~Agiwal, A.~Roy, and N.~Saxena,
\newblock ``Next generation {5G} wireless networks: {A} comprehensive survey,''
\newblock {\em IEEE Communications Surveys and Tutorials}, vol. 18, no. 3, pp.
  1617--1655, Feb. 2016.

\bibitem{MugenRecent}
M.~Peng, Y.~Sun, X.~Li, Z.~Mao, and C.~Wang,
\newblock ``Recent advances in cloud radio access networks: System
  architectures, key techniques, and open issues,''
\newblock {\em IEEE Communications Surveys and Tutorials}, vol. 18, no. 3, pp.
  2282--2308, Mar. 2016.

\bibitem{Jointwireless}
R.~G. Stephen and R.~Zhang,
\newblock ``Joint wireless fronthaul and {OFDMA} resource allocation in
  ultra-dense {CRAN},''
\newblock {\em IEEE Transactions on Communications}, to appear, 2017.

\bibitem{Optimalfronthaul}
S.~Jeong, O.~Simeone, A.~Haimovich, and J.~Kang,
\newblock ``Optimal fronthaul quantization for cloud radio positioning,''
\newblock {\em IEEE Transactions on Vehicular Technology}, vol. 65, no. 4, pp.
  2763--2768, May 2015.

\bibitem{Cluster}
Z.~Zhao, M.~Peng, Z.~Ding, W.~Wang, and H.~V. Poor,
\newblock ``Cluster content caching: An energy-efficient approach to improve
  quality of service in cloud radio access networks,''
\newblock {\em IEEE Journal on Selected Areas in Communications}, vol. 34, no.
  5, pp. 1207--1221, March 2016.

\bibitem{Cooperative}
T.~X. Tran, A.~Hajisami, and D.~Pompili,
\newblock ``Cooperative hierarchical caching in {5G} cloud radio access
  networks {(C-RANs)},''
\newblock {\em available online: arxiv.org/abs/1602.02178}, January 2016.

\bibitem{MeanField}
K.~Hamidouche, W.~Saad, M.~Debbah, and H.~V. Poor,
\newblock ``Mean-field games for distributed caching in ultra-dense small cell
  networks,''
\newblock in {\em Proc. of the 2016 American Control Conference (ACC)}, Boston,
  MA, USA, July 2016.

\bibitem{Content}
M.~Tao, E.~Chen, H.~Zhou, and W.~Yu,
\newblock ``Content-centric sparse multicast beamforming for cache-enabled
  cloud {RAN},''
\newblock {\em IEEE Transactions on Wireless Communications}, vol. 15, no. 9,
  pp. 6118--6131, Sept. 2016.

\bibitem{BackhaulChen}
D.~Chen, S.~Schedler, and V.~Kuehn,
\newblock ``Backhaul traffic balancing and dynamic content-centric clustering
  for the downlink of fog radio access network,''
\newblock in {\em Proc. of IEEE 17th International Workshop on Signal
  Processing Advances in Wireless Communications (SPAWC)}, Edinburgh, UK, July
  2016.

\bibitem{Tran2016Octopus}
T.~X. Tran and D.~Pompili,
\newblock ``Octopus: {A} cooperative hierarchical caching strategy for radio
  access networks,''
\newblock in {\em Proc. of IEEE International Conference on Mobile Ad Hoc and
  Sensor Systems (MASS)}, Brasilia, Brazil, Oct. 2016.

\bibitem{Cachingimprovement}
S.~E. Hajri and M.~Assaad,
\newblock ``Caching improvement using adaptive user clustering,''
\newblock in {\em Proc. of IEEE International workshop on Signal Processing
  advances in Wireless Communications (SPAWC)}, Edinburgh, UK, 2016.

\bibitem{Jointcaching}
A.~Khreishah, J.~Chakareski, and A.~Gharaibeh,
\newblock ``Joint caching, routing, and channel assignment for collaborative
  small-cell cellular networks,''
\newblock {\em IEEE Journal on Selected Areas in Communications}, vol. 34, no.
  8, pp. 2275--2284, Aug. 2016.

\bibitem{ContextAware}
O.~Semiari, W.~Saad, S.~Valentin, M.~Bennis, and H.~V. Poor,
\newblock ``Context-aware small cell networks: {H}ow social metrics improve
  wireless resource allocation,''
\newblock {\em IEEE Trans. on Wireless Commun.}, vol. 14, no. 11, pp.
  5927--5940, July 2015.

\bibitem{Sung2016Efficient}
J.~Sung, M.~Kim, K.~Lim, and J.~K.~K. Rhee,
\newblock ``Efficient cache placement strategy in two-tier wireless content
  delivery network,''
\newblock {\em IEEE Transactions on Multimedia}, vol. 18, no. 6, pp.
  1163--1174, June 2016.

\bibitem{kang2014mobile}
H.~J. Kang, K.~Y. Park, Kumin Cho, and Chung~G Kang,
\newblock ``Mobile caching policies for device-to-device (d2d) content delivery
  networking,''
\newblock in {\em Proc. of IEEE Conference on Computer Communications Workshops
  (INFOCOM WKSHPS)}, Toronto, ON, Canada, May 2014.

\bibitem{De2011Optimum}
D.~D. Vleeschauwer and D.~C. Robinson,
\newblock ``Optimum caching strategies for a telco {CDN},''
\newblock {\em Bell Labs Technical Journal}, vol. 16, no. 2, pp. 115--132,
  Sept. 2011.

\bibitem{Mobilityaware}
R.~Wang, X.~Peng, J.~Zhang, and K.~B. Letaief,
\newblock ``{M}obility-{A}ware caching for content-centric wireless networks:
  {M}odeling and methodology,''
\newblock {\em IEEE Communications Magazine}, vol. 54, no. 8, pp. 77--83, Aug.
  2016.

\bibitem{Bigdata}
E.~Bastug, M.~Bennis, E.~Zeydan, M.~A. Kader, I.~A. Karatepe, A.~S. Er, and
  M.~Debbah,
\newblock ``Big data meets telcos: {A} proactive caching perspective,''
\newblock {\em Journal of Communications and Networks}, vol. 17, no. 6, pp.
  549--557, December 2015.

\bibitem{SoysaPredicting}
D.~A. Soysa, D.~G. Chen, O.~C. Au, and A.~Bermak,
\newblock ``Predicting {YouTube} content popularity via {Facebook} data: A
  network spread model for optimizing multimedia delivery,''
\newblock in {\em Proc. of IEEE Symposium on Computational Intelligence and
  Data Mining {(CIDM)}}, Singapore, April 2013.

\bibitem{NagarajaCaching}
B.~B. Nagaraja and K.~G. Nagananda,
\newblock ``Caching with unknown popularity profiles in small cell networks,''
\newblock in {\em Proc. of IEEE Global Communications Conference (GLOBECOM)},
  San Diego, CA, USA, December 2015.

\bibitem{TadrousOn}
J.~Tadrous and A.~Eryilmaz,
\newblock ``On optimal proactive caching for mobile networks with demand
  uncertainties,''
\newblock {\em IEEE/ACM Transactions on Networking}, vol. 24, no. 5, pp.
  2715--2727, Oct. 2016.

\bibitem{ManytoMany}
K.~Hamidouche, W.~Saad, and M.~Debbah,
\newblock ``Many-to-many matching games for proactive social-caching in
  wireless small cell networks,''
\newblock in {\em Proc. of the 12th Intl. Symposium on Modeling and
  Optimization in Mobile, Ad Hoc, and Wireless Networks (WiOpt), Workshop on
  Wireless Networks: Communication, Cooperation, and Competition}, Hammamet,
  Tunisia, May 2014.

\bibitem{pompili2016elastic}
D.~Pompili, A.~Hajisami, and T.~X Tran,
\newblock ``Elastic resource utilization framework for high capacity and energy
  efficiency in cloud {RAN},''
\newblock {\em IEEE Communications Magazine}, vol. 54, no. 1, pp. 26--32, Jan.
  2016.

\bibitem{Sublinear}
D.~Wang and Z.~Han,
\newblock {\em Sublinear Algorithms for Big Data Applications},
\newblock Springer Berlin Heidelberg, 2015.

\bibitem{Chen2016Echo}
M.~Chen, W.~Saad, and C.~Yin,
\newblock ``Echo state networks for self-organizing resource allocation in
  {LTE-U} with uplink-downlink decoupling,''
\newblock {\em IEEE Transactions on Wireless Communications}, vol. 16, no. 1,
  pp. 3--16, Jan. 2017.

\bibitem{Sparsebeamforming}
B.~Dai and W.~Yu,
\newblock ``Sparse beamforming and user-centric clustering for downlink cloud
  radio access network,''
\newblock {\em IEEE Access}, vol. 2, pp. 1326--1339, October 2014.

\bibitem{Exploring}
S.~Schwarz and M.~Rupp,
\newblock ``Exploring coordinated multipoint beamforming strategies for {5G}
  cellular,''
\newblock {\em IEEE Access}, vol. 2, pp. 930--946, August 2014.

\bibitem{Effective}
D.~Wu and R.~Negi,
\newblock ``Effective capacity: {A} wireless link model for support of quality
  of service,''
\newblock {\em IEEE Transactions on Wireless Communications}, vol. 2, no. 4,
  pp. 630--643, July 2003.

\bibitem{Effectivecapacity}
D.~Wu and R.~Negi,
\newblock ``Effective capacity-based quality of service measures for wireless
  networks,''
\newblock {\em Mobile Networks and Applications}, vol. 11, no. 1, pp. 91--99,
  February 2006.

\bibitem{Harnessing}
J.~Herbert and H.~Harald,
\newblock ``Harnessing nonlinearity: {P}redicting chaotic systems and saving
  energy in wireless communication,''
\newblock {\em Science}, vol. 304, no. 5667, pp. 78--80, 2004.

\bibitem{APractical}
M.~Luko\u{s}evicius,
\newblock {\em A Practical Guide to Applying Echo State Networks},
\newblock Springer Berlin Heidelberg, 2012.

\bibitem{Short}
H.~Jaeger,
\newblock ``Short term memory in echo state networks,''
\newblock in {\em GMD Report}, 2001.

\bibitem{Minimum}
R.~Ali and T.~Peter,
\newblock ``Minimum complexity echo state network,''
\newblock {\em IEEE Transactions on Neural Networks}, vol. 22, no. 1, pp.
  131--144, November 2011.

\bibitem{nguyen2012extracting}
N.~T. Nguyen, Y.~Wang, H.~Li, X.~Liu, and Z.~Han,
\newblock ``Extracting typical users' moving patterns using deep learning,''
\newblock in {\em IEEE Global Communications Conference (GLOBECOM)}, Anaheim,
  CA, USA, Dec. 2012.

\end{thebibliography}

\end{document}